\title{Private Federated Frequency Estimation: \\ Adapting to the Hardness of the Instance}
\author{%
Jingfeng Wu\thanks{Work done during an internship at Google Research} \\
Johns Hopkins University\\
\texttt{uuujf@jhu.edu} \\
\And 
Wennan Zhu \\
Google Research\\
\texttt{wennanzhu@google.com} \\
\AND 
Peter Kairouz \\
Google Research\\
\texttt{kairouz@google.com} \\
\And
Vladimir Braverman \\
Rice University \\
\texttt{vb21@rice.edu}
}
\begin{document}

\maketitle

\begin{abstract}
In \emph{federated frequency estimation} (FFE), multiple clients work together to estimate the frequencies of their collective data by communicating with a server that respects the privacy constraints of Secure Summation ($\secsum$), a cryptographic multi-party computation protocol that ensures that the server can only access the sum of client-held vectors. For single-round FFE, it is known that \emph{count sketching} is nearly information-theoretically optimal for achieving the fundamental accuracy-communication trade-offs \citep{chen2022communication}. However, we show that under the more practical multi-round FEE setting, simple adaptations of count sketching are strictly sub-optimal, and we propose a novel hybrid sketching algorithm that is \emph{provably} more accurate. We also address the following fundamental question: \emph{how should a practitioner set the sketch size in a way that adapts to the hardness of the underlying problem}? We propose a two-phase approach that allows for the use of a smaller sketch size for simpler problems (e.g., near-sparse or light-tailed distributions). We conclude our work by showing how differential privacy can be added to our algorithm and verifying its superior performance through extensive experiments conducted on large-scale datasets.
\end{abstract}

\section{Introduction}
In many distributed learning applications, a server seeks to compute population information about data that is distributed across multiple clients (users).
For example, consider a distributed frequency estimation problem where there are $n$ clients, each holding a local data from a domain of size $d$, and a server that aims to estimate the frequency of the items from the $n$ clients with the minimum communication cost.
This task can be done efficiently by letting each client \emph{binary encode} their data and send the encoding to the server, at a local communication bandwidth cost of $\log(d)$ bits.
With the binary encoding, the server can faithfully decode \emph{each} local data and compute the global frequency vector (i.e., the normalized histogram vector).

However, the local data could be sensitive or private, and the clients may wish to keep it hidden from the server.
The above binary encoding communication method, unfortunately, allows the server to observe each individual local data, and therefore may not satisfy the users' privacy concerns.
\emph{Federated Analytics} (FA) \citep{ramage2020fablog,zhu2020federated} addresses this issue by developing new methods that enable the server to learn population information about the clients while preventing the server from prying on any individual local data.
In particular, a cryptographic multi-party computation protocol, \emph{Secure Summation} ($\secsum$) \citep{bonawitz2016practical}, has become a widely adopted solution to provide data minimization guarantees for FA \citep{bonawitz2021federated}.
Specifically, $\secsum$ sets up a communication protocol between clients and the server, which injects carefully designed additive noise to each data that cancels out when \emph{all of the local data are summed together}, but blurs out (information theoretically) each individual local data. 
Under $\secsum$, the server is able to faithfully obtain the correct summation of the data from all clients but is unable to read a single local data. 
The \emph{federated frequency estimation} (FFE) problem refers to the distributed frequency estimation problem under the constraint of $\secsum$. 
Clearly, the binary encoding method is not compatible with $\secsum$, because when the binary encoding is passed to the server through $\secsum$, the server only gets the summation of the binary encodings of the users' data, which does not provide sufficient information for computing the global frequency vector.

A naive approach to FFE can be accomplished by employing \emph{one-hot encoding}: each client encodes its local data into a $d$-dimensional one-hot vector that represents the local frequency vector and sends it to the server through $\secsum$.
Then the server observes the summation of the local frequency vectors using $\secsum$ and scales it by the number of clients to obtain the true frequency vector. 
However, this one-hot encoding approach costs $\Theta(d\log(n))$ bits of communication per client. This is  because $\secsum$ adds noise from a finite group of size $\Theta(\log n)$ to each component of the $d$-dimensional local frequency vector \citep{bonawitz2016practical} to avoid overflows.
With a linear dependence on domain size $d$, the one-hot encoding approach is inefficient for large domain problems, which is a common setting in practice.
In what follows, we focus on the regime where $d>n$.

Recently, linear compression methods were applied to mitigate the high communication cost issue for FFE with large domains \citep{chen2021communication,chen2022communication}.
The idea is to first \emph{linearly compress} the local frequency vector into a lower dimensional vector before sending it to the server through $\secsum$; as linear compression operators commute with the summation operator, the server equivalently observes a linearly compressed global frequency vector though $\secsum$ (after rescaling by the number of clients).
The server then applies standard decoding methods to approximately recover the global frequency vector from the linearly compressed one. 
In particular, \citet{chen2022communication} show that $\cs$ \citep{charikar2002finding} (among other sparse recovery methods) can be used as a linear compressor for the above purpose, which leads to a communication bandwidth cost of $\Ocal(n \log (d) \log(n))$ bits.
Therefore when $d>n$, $\cs$ achieves a saving in local communication bandwidth compared to the one-hot encoding method that requires $\Theta(d\log(n))$ bits.
Moreover, \citet{chen2022communication} show that for FFE with a single communication round, an $\Omega(n \log (d))$ local communication cost is information-theoretically \emph{unavoidable} for worst-case data distributions, i.e., we cannot do better without making additional assumptions on the global frequency vector.

\paragraph{Contributions.}
In this work, we make three notable extensions to $\cs$ for FFE problems. 
\begin{enumerate}[leftmargin=*]

\item We show that the way \citet{chen2022communication} set up the sketch size (linear in the number of clients $n$) is often \emph{pessimistic} (see Corollary \ref{thm:cs:size:example}).
In fact, in the streaming literature, the estimation error induced by $\cs$ is known to adapt to the tail norm of the global frequency vector \citep{minton2014improved}, which is often sub-linear in $n$.
Motivated by this, we provide an easy-to-use, two-phase approach that allows practitioners to determine the necessary sketch size by automatically adapting to the hardness of the FFE problem instance.

\item
We consider FFE with multiple communication rounds, which better models practical deployments of FA where aggregating over (hundreds of) millions of clients in a single round is not possible due to device availability and limited server bandwidth. 
We propose a new multi-round sketch algorithm called $\hcs$ that \emph{provably} performs better than simple adaptations of $\cs$ in the multi-round setting, leading to further improvements in the communication cost.
Surprisingly, we show that $\hcs$ adapts to the tail norm of a \emph{heterogeneity vector} (see Theorem \ref{thm:hcs}).
Moreover, the tail of the heterogeneity vector is always no heavier, and could be much lighter, than that of the global frequency vector, explaining the advantage of $\hcs$. For instance, on the C4 dataset \citep{bowman2020c4} with a domain size of $d=150,868$ and $150,000$ users, we show that our method can reduce the sketch size by $83\%$ relative to simple sketching methods when the number of sketch rows is not very large.

\item 
We extend the Gaussian mechanism for $\cs$ proposed by \citet{pagh2022improved,zhao2022differentially} to the multi-round FFE setting to show how our sketching methods can be made differentially private \citep{dwork2006calibrating}. 
We also characterize the trade-offs between accuracy and privacy for our proposed method. 
\end{enumerate}

We conclude by verifying the performance of our methods through experiments conducted on several large-scale datasets. All proofs and additional experimental results are differed to the appendices.

\section{Adapting \texorpdfstring{$\cs$}{Count Sketch} to the Hardness of the Instance}\label{sec:adaptive}
In this part, we focus on single-round FFE and show how $\cs$ can achieve better results when the underlying problem is simpler. Motivated by this, we also provide a two-phase method for auto-tuning the hyperparameters of $\cs$, allowing it to automatically adapt to the hardness of the instance.

\textbf{Single-Round FFE.}
Consider $n$ clients, each holding an item from a discrete domain of size $d$. 
The items are denoted by $x_t \in [d]$ for $t=1,\dots,n$.
Then the frequency of item $j$ is denoted by 
\[
\textstyle f_j := \frac{1}{n}\sum_{t=1}^n\ind{x_t = j}.
\]
We use $\xB_t$ to denote the one-hot representation of $x_t$, i.e., $\xB_t = \eB_{x_t}$ where $(\eB_t)_{t=1}^d$ refers to the canonical basis.
Then the frequency vector can be denoted by 
\[
\textstyle \fB := (f_1,\dots,f_d)^\top = \frac{1}{n}\sum_{t=1}^n \xB_t \in [0,1]^d.
\]
In single-round FFE, the $n$ clients communicate with a server once under the constraint of $\secsum$, and aim to estimate the frequency vector $\fB$. Note that $\secsum$ ensures that the server can only observe the sum of the local data. 

\begin{algorithm}[h]
\caption{\textsc{Count Sketch for Federated Frequency Estimation}}\label{alg:count-sketch}
\begin{algorithmic}[1]
\REQUIRE $n$ clients with local data $x_t\in[d]$ for $t=1,\dots,n$.
Sketch length $L$ and width $W$.
\STATE The server prepares independent hash functions and broadcasts them to each client:
\[\textstyle h_\ell :[d] \to [W],\ \sigma_{\ell}: [d] \to \{\pm 1\}, \ \text{for} \ \ell \in [L].\]
\FOR{Client $t=1,\dots,n$ in parallel}
\STATE Client $t$ encodes the local data $x_t \in [d]$ to $\enc (x_t) \in \Rbb^{L\times W}$ where
\[\textstyle \big( \enc (x_t) \big)_{\ell, k} = \ind{h_\ell (x_t) = k}\cdot \sigma_\ell(x_t) \ \text{for}\ \ell \in[L],\ k \in [W]. \]
\STATE Client $t$ sends $\enc (x_t) \in \Rbb^{L\times W}$ to $\secsum$.
\ENDFOR
\STATE $\secsum$ receives $\big(\enc (x_t) \big)_{t=1}^n$ and only reveals the summation  $\sum_{t=1}^n \enc (x_t)$ to the server.
\FOR{Item $j=1,\dots,d$ in parallel}
\STATE Server produces $L$ estimators for $f_j$:
\[ \textstyle \dec(j;\ell) := \sigma_\ell(j) \cdot  \big( \frac{1}{n}\sum_{t=1}^n \enc (x_t) \big)_{\ell, h_\ell(j)} \ \text{for}\ \ell\in [L]. \]
\STATE Server computes the median of the $L$ estimators:
\[ \textstyle \dec(j) := \median\{\dec(j;\ell):\ \ell\in[L] \}.\]
\ENDFOR
\RETURN $(\dec(j))_{j=1}^d$ as estimate to $(f_j)_{j=1}^d$.
\end{algorithmic}
\end{algorithm}

\textbf{Count Sketch.}
$\cs$ is a classic streaming algorithm that dates back to \citep{charikar2002finding}.
In the literature of streaming algorithms, $\cs$ has been extensively studied and is known to be able to adapt to the hardness of the problem instance. 
Specifically, $\cs$ of a fixed size induces an estimation error adapting to the tail norm of the global frequency vector  \citep{minton2014improved}.

A recent work by \citet{chen2022communication} apply $\cs$ to single-round FFE. See Algorithm \ref{alg:count-sketch} for details. 
They show that $\cs$ approximately solves single-round FFE with a communication cost of $\Ocal(n \log(d)\log(n))$ bits per client.
Moreover, they show $\Omega(n \log(d))$ bits of communication per client is unavoidable for worst-case data distributions (unless additional assumptions are made), confirming its near optimality. 
However, the results by \citet{chen2022communication} are \emph{pessimistic} as they ignore the ability of $\cs$ to adapt to the hardness of the problem instance.
In what follows, we show how the performance of $\cs$ can be improved when the underlying problem becomes simpler.

We first present a problem-dependent accuracy guarantee for $\cs$ of a fixed size, $L\times W$, that gives the sharpest bound to our knowledge. 
The bound is due to \citet{minton2014improved} and is restated for our purpose.

\begin{proposition}[Restated Theorem 4.1 in \citet{minton2014improved}]\label{thm:cs}
Let $(\hat f_j)_{j=1}^d$ be estimates produced by $\cs$ (see Algorithm \ref{alg:count-sketch}).
Then for each $p\in(0,1)$, $W\ge 2$ and $L \ge \log(1/p)$, it holds that: for each $j \in [d]$, with probability at least $1-p$,
\begin{align*}
        | \hat f_j  - f_j | 
    < C\cdot \sqrt{\frac{\log(1/p)}{L}\cdot \frac{1}{W} \cdot \sum_{i> W }(f^*_i)^2}, 
\end{align*}
where $(f_i^*)_{i\ge 1}$ refers to $(f_i)_{i \ge 1}$ sorted in non-increasing order, 
and $C>0$ is an absolute constant.
\end{proposition}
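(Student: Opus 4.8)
The plan is to reduce the median estimator to a binomial tail bound and then identify the correct variance scale. Fix a coordinate $j$ and a row $\ell$. Writing $\tau_{\ell,i} := \sigma_\ell(i)\sigma_\ell(j) \in \{\pm 1\}$ for $i \neq j$ (independent Rademacher variables), the per-row error is
\[
Z_{\ell} := \dec(j;\ell) - f_j = \sum_{i \neq j} f_i\, \tau_{\ell,i}\, \ind{h_\ell(i) = h_\ell(j)}.
\]
Conditioned on the hash $h_\ell$, this is a mean-zero Rademacher sum, hence symmetric and sub-Gaussian with variance proxy $V_\ell := \sum_{i\neq j,\, h_\ell(i)=h_\ell(j)} f_i^2$. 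Since the $L$ rows use independent hashes and signs, the $Z_\ell$ are i.i.d., which is what makes the median tractable.

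Next I would boost a single-row tail bound to the median. The event $\dec(j) - f_j > \lambda$ forces at least $L/2$ of the independent rows to satisfy $Z_\ell > \lambda$; thus
\[
\Pr[\dec(j) - f_j > \lambda] \le \Pr\big[\,\mathrm{Bin}(L, q) \ge L/2\,\big], \qquad q := \Pr[Z_\ell > \lambda].
\]
If I can show $q \le \tfrac12 e^{-c\lambda^2/\sigma^2}$ for an absolute constant $c$ and the tail scale $\sigma^2 := \tfrac1W \sum_{i>W}(f_i^*)^2$, then a Chernoff bound on $\mathrm{Bin}(L,q)$ yields $\Pr[\dec(j)-f_j > \lambda] \le \exp(-c' L \lambda^2/\sigma^2)$; symmetrizing gives the same for the lower tail. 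Setting the exponent equal to $\log(2/p)$ and solving for $\lambda$ produces exactly $\lambda = C\sqrt{\tfrac{\log(1/p)}{L}\cdot\tfrac1W\sum_{i>W}(f_i^*)^2}$, and the hypotheses $W\ge 2$, $L\ge\log(1/p)$ keep $\lambda$ in the regime where the single-row bound is valid.

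The heart of the argument — and the main obstacle — is establishing $q \le \tfrac12 e^{-c\lambda^2/\sigma^2}$ with the \emph{tail} variance $\sigma^2$ rather than the full variance $\tfrac1W\sum_{i\neq j}f_i^2$. To this end I would split the colliding mass into the heavy items $H$ (those attaining the top $W$ values $f_1^*,\dots,f_W^*$) and the light remainder $T$, writing $Z_\ell = X_\ell + Y_\ell$ accordingly. The light part $X_\ell$ is sub-Gaussian with variance proxy at most $\sigma^2$, so it alone contributes the desired Gaussian tail. The heavy part $Y_\ell$ is a sparse, symmetric signed sum whose collisions are \emph{not} rare — the expected number of heavy items sharing $j$'s bucket is $\approx 1$ — so one cannot simply condition it away. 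The key point, which is the technical contribution of \citet{minton2014improved}, is that because $Y_\ell$ is symmetric and independent of $X_\ell$ across rows, a heavy collision is equally likely to push $Z_\ell$ up or down; averaging over its sign via $Y_\ell \stackrel{d}{=} -Y_\ell$ shows that heavy collisions do not bias the median and only inflate $q$ by a controlled factor, preserving the $e^{-c\lambda^2/\sigma^2}$ scaling. Making this folding argument quantitative — balancing the contribution of rare large heavy collisions against the Gaussian light tail — is the delicate step; everything else is routine concentration.
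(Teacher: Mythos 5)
First, a caveat: the paper does not actually prove this proposition---its ``proof'' is a one-line deferral to Theorem~4.1 of \citet{minton2014improved}---so your sketch must be judged against the argument in that reference. Your architecture (symmetric per-row error $Z_\ell$, a head/tail split at the top $W$ coordinates, boosting to the median via a binomial tail) is the right one. But the key single-row lemma you propose, $q:=\Pr[Z_\ell>\lambda]\le\tfrac12 e^{-c\lambda^2/\sigma^2}$, is the wrong statement, and the step ``a Chernoff bound on $\mathrm{Bin}(L,q)$ yields $\exp(-c'L\lambda^2/\sigma^2)$'' fails precisely in the regime the theorem addresses. Under the hypothesis $L\ge\log(1/p)$ the target error is $\lambda\le C\sigma$, so your lemma only gives $q\le\tfrac12 e^{-c\lambda^2/\sigma^2}=\tfrac12-\Theta(\lambda^2/\sigma^2)$; Hoeffding on $\mathrm{Bin}(L,q)$ with success probability $\tfrac12-\gamma$ gives $\exp(-2L\gamma^2)=\exp(-\Theta(L\lambda^4/\sigma^4))$, and solving the exponent against $\log(1/p)$ produces $\lambda\asymp(\log(1/p)/L)^{1/4}\,\sigma$---a fourth root, not the claimed square root. (The multiplicative form $(2eq)^{L/2}$ only beats $1$ when $\lambda\gtrsim\sigma$, i.e.\ when $\log(1/p)\gtrsim L$, the opposite of the stated hypothesis.) Moreover the sub-Gaussian upper tail at the tail-variance scale is simply false in the presence of a heavy hitter: if $f_1^*=1/2$ and all other frequencies are tiny, then $\Pr[Z_\ell>\lambda]=\tfrac{1}{2W}$ for any $\lambda<1/2$, which vastly exceeds $\tfrac12 e^{-c\lambda^2/\sigma^2}$ once $\lambda\gg\sigma$; no ``folding'' of the heavy part's sign can preserve that scaling.

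What is actually needed per row is an \emph{anti-concentration} (small-ball) bound, $\Pr[|Z_\ell|\le\epsilon\sigma]\ge c\epsilon$ for all $\epsilon\in(0,1)$, which by symmetry gives $q=\tfrac12\bigl(1-\Pr[|Z_\ell|\le\epsilon\sigma]\bigr)\le\tfrac12-\tfrac{c\epsilon}{2}$, \emph{linear} in $\epsilon$; Hoeffding then yields $\exp(-\Omega(c^2\epsilon^2 L))$ and the advertised $\sqrt{\log(1/p)/L}$ scaling. This is a lower bound on the mass near zero, not an upper bound on the tail, and establishing it for the colliding-tail Rademacher sum is where the real content of \citet{minton2014improved} lies. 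Relatedly, your claim that the heavy items ``cannot simply be conditioned away'' because the expected number of head collisions is about $1$ is backwards: $\Pr[\text{no head collision}]=(1-1/W)^{W}\ge 1/4$ is a constant, and since the goal is only the lower bound $\Pr[|Z_\ell|\le\epsilon\sigma]\ge\Omega(\epsilon)$, conditioning on this constant-probability event (under which the head contributes exactly zero) costs only a constant factor---that is the clean way to dispose of the head. A secondary gap: the light part is sub-Gaussian with (conditional) variance proxy $\sum_{i>W,\,h_\ell(i)=h_\ell(j)}(f_i^*)^2$, which equals $\sigma^2$ only in expectation and exceeds it with constant probability, so ``variance proxy at most $\sigma^2$'' also needs a Markov step.
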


For the concreteness of discussion, we will focus on $\ell_\infty$ as a measure of estimation error in the remainder of the paper. 
Our discussions can be easily extended to $\ell_2$ or other types of error measures. Proposition \ref{thm:cs} directly implies the following $\ell_\infty$-error bounds for $\cs$ (by an application of union bound).

\begin{corollary}[$\ell_\infty$-error bounds for $\cs$]\label{thm:cs:error}
Consider Algorithm \ref{alg:count-sketch}.
Then for each $p\in(0,1)$, $L = \log(d/p)$ and $W\ge 2$, it holds that: with probability at least $1-p$,
\begin{equation}\label{eq:cs:linfty}
        \| \dec(\cdot)  - \fB \|_\infty 
    < C\cdot \sqrt{ \frac{1}{W} \cdot \sum_{i> W }(f^*_i)^2},
\end{equation}
where $C>0$ is an absolute constant.
In particular, \eqref{eq:cs:linfty} implies that
\begin{equation*}
        \| \dec(\cdot)  - \fB \|_\infty 
    <  C/W.
\end{equation*}
\end{corollary}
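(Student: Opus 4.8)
The plan is to obtain both displays in Corollary \ref{thm:cs:error} directly from the per-coordinate guarantee of Proposition \ref{thm:cs}, first by a union bound over the $d$ coordinates and then by controlling the tail quantity $\frac{1}{W}\sum_{i>W}(f_i^*)^2$ using only the fact that $\fB$ is a probability vector.

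First I would invoke Proposition \ref{thm:cs} with the per-coordinate failure probability set to $p' := p/d$. Since the corollary takes $L = \log(d/p) = \log(1/p')$, the hypothesis $L \ge \log(1/p')$ is met and the ratio $\log(1/p')/L$ collapses to $1$. Hence for each fixed $j \in [d]$, with probability at least $1 - p/d$,
\[
 |\dec(j) - f_j| < C \cdot \sqrt{\frac{1}{W} \cdot \sum_{i>W}(f_i^*)^2}.
\]
A union bound over the $d$ coordinates bounds the total failure probability by $d \cdot (p/d) = p$, so this inequality holds simultaneously for all $j$ with probability at least $1-p$. Taking the maximum over $j$ yields \eqref{eq:cs:linfty}.

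For the instance-independent bound $C/W$, I would bound the tail norm crudely. Because $\fB \in [0,1]^d$ satisfies $\sum_i f_i = 1$, the sorted entries obey $i \cdot f_i^* \le \sum_{k \le i} f_k^* \le 1$, i.e. $f_i^* \le 1/i$. Therefore
\[
 \sum_{i>W}(f_i^*)^2 \le \sum_{i>W}\frac{1}{i^2} \le \int_W^\infty \frac{\dd x}{x^2} = \frac{1}{W},
\]
so $\frac{1}{W}\sum_{i>W}(f_i^*)^2 \le \frac{1}{W^2}$; substituting into \eqref{eq:cs:linfty} gives $\|\dec(\cdot) - \fB\|_\infty < C/W$.

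The argument is essentially bookkeeping: a union bound to upgrade a per-coordinate bound to an $\ell_\infty$ bound, and a substitution that exploits the exact calibration $L = \log(d/p)$. The only genuinely non-mechanical step is recognizing that the sorted tail is dominated by the envelope $f_i^* \le 1/i$, which is precisely where the probability-vector structure of $\fB$ enters; everything else is routine. I would be careful to state the result with $\dd x$ replaced by the plain differential if the macro $\dd$ is unavailable, writing the integral as $\int_W^\infty x^{-2}\,dx = 1/W$ instead.
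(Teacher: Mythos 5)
Your proof is correct and follows essentially the same route as the paper's: a union bound over the $d$ coordinates with per-coordinate failure probability $p/d$, calibrated so that $L=\log(d/p)$ makes the $\log(1/p')/L$ factor equal to $1$. The only cosmetic difference is in the second display, where you bound the tail via the envelope $f_i^*\le 1/i$ and an integral comparison, whereas the paper notes $f_i^*\le f_W^*\le 1/W$ for $i\ge W$ and uses $\sum_{i>W}f_i^*\le 1$; both are equally elementary and give the same $1/W^2$ bound on $\frac{1}{W}\sum_{i>W}(f_i^*)^2$.
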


According to Corollary \ref{thm:cs:error}, the estimation error is smaller when the underlying frequency vector $(f_i^*)_{i\ge1}$ has a lighter tail.
In other words, $\cs$ requires a smaller communication bandwidth when the global frequency vector has a lighter tail. Our next Corollary \ref{thm:cs:size} precisely characterizes this adaptive property in terms of the required communication bandwidth. 
To show this, we will need the following definition on the \emph{probable approximate correctness} of an estimate.

\begin{definition}[$(\tau,p)$-correctness]
An estimate $\hat{\fB}:=(\hat{f}_i)_{i=1}^d$ of the global frequency vector $\fB:=(f_i)_{i=1}^d$ is \emph{$(\tau, p)$-correct} if
\[\textstyle
\Pbb\Big\{ \|\hat{\fB} - \fB\|_\infty := \max_i |\hat{f}_i - f_i| > \tau \Big\} < p.
\]
\end{definition}

\begin{corollary}[Oracle sketch size]\label{thm:cs:size}
Fix parameters $\tau, p \in (0,1)$.
Then for $\cs$ (see Algorithm \ref{alg:count-sketch}) to produce an $(\tau, p)$-correct estimate, it suffices to set the sketch size to $L = \log(d/p)$ and
\begin{equation}\label{eq:cs:width}
    \textstyle W = C\cdot \min\bigg\{  \Big( \#\{f_i: f_i \ge \tau\} + \frac{1}{\tau^2}\cdot \sum_{f_i < \tau} f_i^2 \Big),\ n\bigg\},
\end{equation}
where $C>0$ is an absolute constant.
In particular, the width $W$ in \eqref{eq:cs:width} satisfies
\begin{equation}
\label{eq:cs:width:worst-case}
    \textstyle W \le W_{\text{worst}} := C\cdot \min\big\{ {2}/{\tau},\ n \big\}.
\end{equation}
\end{corollary}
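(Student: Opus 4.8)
The plan is to derive Corollary~\ref{thm:cs:size} directly from the $\ell_\infty$ bound in Corollary~\ref{thm:cs:error}, turning the probabilistic accuracy requirement into a purely deterministic condition on the width $W$. Write $C_0$ for the absolute constant appearing in Corollary~\ref{thm:cs:error}. Fixing $L = \log(d/p)$, that corollary guarantees that with probability at least $1-p$,
\[
\|\dec(\cdot) - \fB\|_\infty < C_0 \sqrt{\frac{1}{W}\sum_{i>W}(f_i^*)^2}.
\]
Hence to certify $(\tau,p)$-correctness it suffices to choose $W$ so that the right-hand side is at most $\tau$, i.e., so that
\[
\sum_{i>W}(f_i^*)^2 \le \frac{\tau^2}{C_0^2}\, W.
\]
The difficulty is that $W$ appears both as the cutoff index inside the tail sum and as the multiplier on the right, so I cannot simply invert the inequality; the crux of the argument is to decouple these two roles.

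To do so, I would split the coordinates at the threshold $\tau$. Let $S := \#\{i : f_i \ge \tau\}$ be the number of heavy coordinates and $T := \sum_{f_i < \tau} f_i^2$ the tail $\ell_2^2$-mass of the light ones. Because the sorted sequence is non-increasing, the first $S$ entries of $(f_i^*)$ are exactly the heavy ones, so as soon as $W \ge S$ every index $i > W$ is light and
\[
\sum_{i>W}(f_i^*)^2 \le \sum_{i>S}(f_i^*)^2 = T,
\]
a bound that no longer depends on $W$. Setting $W = C(S + T/\tau^2)$ with $C \ge \max\{1, C_0^2\}$ then makes $W \ge S$ automatically (so the display above applies) and simultaneously gives $\frac{\tau^2}{C_0^2} W \ge S\tau^2 + T \ge T$, which is exactly the required inequality. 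This establishes correctness for the first branch of the minimum in \eqref{eq:cs:width}.

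Next I would handle the cap at $n$. Since $\fB = \frac1n\sum_t \xB_t$ is an average of $n$ one-hot vectors, it has at most $n$ nonzero entries, so $f_i^* = 0$ for all $i > n$; consequently any $W \ge n$ forces the tail sum to vanish and yields zero error. Thus replacing $S + T/\tau^2$ by $\min\{S + T/\tau^2, n\}$ never breaks correctness: in the regime $S + T/\tau^2 > n$ we fall back on $W = Cn \ge n$ and recover $\fB$ exactly. (The marginal requirement $W \ge 2$ from Corollary~\ref{thm:cs:error} is absorbed into the constant.)

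Finally, the worst-case bound \eqref{eq:cs:width:worst-case} follows from the normalization $\sum_i f_i = 1$. Each heavy coordinate contributes at least $\tau$ to this unit sum, so $S \le 1/\tau$; and for light coordinates $f_i^2 < \tau f_i$, whence $T \le \tau \sum_{f_i<\tau} f_i \le \tau$ and $T/\tau^2 \le 1/\tau$. Adding these gives $S + T/\tau^2 \le 2/\tau$, so $\min\{S + T/\tau^2, n\} \le \min\{2/\tau, n\}$ and therefore $W \le W_{\text{worst}}$. I expect the only genuinely delicate step to be the decoupling in the second paragraph; the remaining estimates are elementary consequences of the definitions and of $\|\fB\|_1 = 1$.
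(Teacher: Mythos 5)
Your proposal is correct, and its core sufficiency argument is the same as the paper's: split the coordinates at the threshold $\tau$, observe that $W \ge \#\{f_i \ge \tau\}$ forces every index beyond $W$ in the sorted order to be light so that the tail sum is at most $\sum_{f_i<\tau} f_i^2$, and then use the remaining $\frac{1}{\tau^2}\sum_{f_i<\tau}f_i^2$ portion of $W$ to drive the bound of Corollary~\ref{thm:cs:error} below $\tau$. The two write-ups differ only in scope: the paper additionally proves a near-converse (any $W$ with $E(W)\le\tau$ must be at least half of $\#\{f_i\ge\tau\}+\frac{1}{\tau^2}\sum_{f_i<\tau}f_i^2$, via the function $F(k)$ minimized at $k^*=\#\{f_i\ge\tau\}$), which is not needed for the statement, while you instead supply the two pieces the paper's proof leaves implicit --- the $\min\{\cdot,n\}$ cap via $n$-sparsity of $\fB$, and the worst-case bound \eqref{eq:cs:width:worst-case} via $\|\fB\|_1=1$ --- both of which you argue correctly.
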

Corollary \ref{thm:cs:size} suggests that the sketch size can be set smaller if the underlying frequency vector has a lighter tail. 
When translated to the communication bits per client (that is $\Ocal(L\cdot W\cdot \log(n))$, where $\log(n)$ accounts for the cost of $\secsum$), Corollary \ref{thm:cs:size} implies that $\cs$ requires 
\begin{equation}\label{eq:cs:size}
    \textstyle \Ocal\Big(\min\big\{\#\{f_i \ge \tau\} + \frac{1}{\tau^2} \sum_{f_i < \tau}f_i^2,\ n\big\} \log(d)\log(n)\Big)
    \le \Ocal(\min\{1/\tau, n\} \log(d)\log(n))
\end{equation}
bits of communication per client to be $(\tau,p)$-correct. 
In the worst case where $(f_i)_{i=1}^d$ is $\Theta(n)$-sparse and $\tau = \Ocal(1/n)$, \eqref{eq:cs:size} nearly matches the $\Omega(n\log(d))$ information-theoretic worst-case communication cost shown in \citet{chen2022communication}, ignoring the $\log(n)$ factor from $\secsum$.
However, in practice, $(f_i)_{i=1}^d$ has a fast-decaying tail, and \eqref{eq:cs:size} suggests that $\cs$ can use less communication to solve the problem.
We provide the following examples for a better illustration of the sharp contrast between the worst and typical cases. 
\begin{corollary}[Examples]\label{thm:cs:size:example}
Fix parameters $\tau, p \in (0,1)$.
Consider Algorithm \ref{alg:count-sketch}
with sketch length  $L = \log(d/p)$.
Then in each case for Algorithm \ref{alg:count-sketch}
to produce an $(\tau, p)$-correct estimate for $\tau>1/n$:
\begin{enumerate}[leftmargin=*,nosep]
    \item When $f_i \propto 2^{-i}$, it suffices to set $W = \Theta(\log(1/\tau))$.
    \item When $f_i \propto i^{-a}$ for $a>1$, it suffices to set $W =\Theta( {\tau^{-1/a}})$.
    \item When $f_i \propto i^{-1}\log^{-b}(i)$ for $b>1$, it suffices to set $W = \Theta({\tau^{-1}\log^{-b} (1/\tau)})$.
    \item When $f_i = {10}/{n}$ for $i=1,\dots,n/10$, it suffices to set $W = \Theta(1/\tau)$.
\end{enumerate}
\end{corollary}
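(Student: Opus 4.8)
The plan is to obtain all four bounds directly from the oracle sketch size of Corollary~\ref{thm:cs:size}, i.e.\ from the quantity
\[
g(\tau) := \#\{i : f_i \ge \tau\} + \frac{1}{\tau^2}\sum_{i : f_i < \tau} f_i^2,
\]
for which \eqref{eq:cs:width} guarantees that $W = C\min\{g(\tau), n\}$ suffices; it thus remains only to estimate $g(\tau)$ up to constants for each profile. Because every profile here is non-increasing in $i$, I would first locate the threshold index $i^\star(\tau) := \max\{i : f_i \ge \tau\}$, after which the head count equals $i^\star$ and the residual is the tail sum $\sum_{i>i^\star} f_i^2$. The normalization is benign for the first three profiles: the series $\sum_i 2^{-i}$, $\sum_i i^{-a}$ ($a>1$), and $\sum_i i^{-1}\log^{-b} i$ ($b>1$) all converge, so $f_i$ equals a $\Theta(1)$ constant times its stated value and that constant is absorbed into the $\Theta(\cdot)$ (I take $d$ large enough that truncating the tail at $d$ changes nothing at this resolution).

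For the head counts I would solve $f_{i^\star}=\Theta(\tau)$. The geometric profile gives $2^{-i^\star}=\Theta(\tau)$, hence $i^\star=\Theta(\log(1/\tau))$; the pure power law gives $(i^\star)^{-a}=\Theta(\tau)$, hence $i^\star=\Theta(\tau^{-1/a})$. The subtle case is profile~3, where $i^\star$ solves the implicit relation $i^\star\log^b(i^\star)=\Theta(1/\tau)$; I would resolve this by first showing $\log i^\star=(1+o(1))\log(1/\tau)$ (the correction $\log\log(1/\tau)$ is of lower order), which then yields $i^\star=\Theta\big(\tau^{-1}\log^{-b}(1/\tau)\big)$.

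For the scaled tail $\tau^{-2}\sum_{i>i^\star}f_i^2$ I would compare the (decreasing) summand to an integral. The geometric tail is itself geometric, $\sum_{i>i^\star}4^{-i}=\Theta(4^{-i^\star})=\Theta(\tau^2)$, so after scaling it is $\Theta(1)$ and is dominated by the head count $\Theta(\log(1/\tau))$. For the power law, $\sum_{i>i^\star}i^{-2a}=\Theta\big((i^\star)^{1-2a}\big)=\Theta(\tau^{2-1/a})$ (using $2a>1$), so the scaled tail is $\Theta(\tau^{-1/a})$, matching the head exactly. For profile~3, treating $\log^{-2b} i$ as slowly varying gives $\sum_{i>i^\star}i^{-2}\log^{-2b} i=\Theta\big(\log^{-2b}(i^\star)\,(i^\star)^{-1}\big)$, and substituting $i^\star$ together with $\log i^\star=(1+o(1))\log(1/\tau)$ produces a scaled tail of order $\tau^{-1}\log^{-b}(1/\tau)$, again matching the head. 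Hence in each of the first three cases $g(\tau)$ equals the head count up to constants, which is the stated $W$; since that head count is below $1/\tau<n$ for $\tau>1/n$, the cap $\min\{\cdot,n\}$ is inactive.

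Profile~4 is an exact but piecewise computation and is where I would be most careful with cases. For $1/n<\tau\le 10/n$ all $n/10$ nonzero coordinates satisfy $f_i\ge\tau$, so the tail sum vanishes and $g(\tau)=n/10$; since $1/\tau\in[n/10,n)$ here, this is $\Theta(1/\tau)$, i.e.\ the full (near-)worst-case order of \eqref{eq:cs:width:worst-case}. For $\tau>10/n$ the head is empty and $g(\tau)=\tau^{-2}\cdot(n/10)(10/n)^2=10/(n\tau^2)$, which is $\le 1/\tau$ precisely because $\tau>10/n$; combining the two ranges gives $g(\tau)\le\Theta(1/\tau)$ for all $\tau>1/n$, with order equality at the natural scale $\tau=\Theta(1/n)$, so $W=\Theta(1/\tau)$ suffices (and the cap $\min\{\cdot,n\}$ never binds once $\tau>1/n$). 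The single genuine obstacle in the whole argument is profile~3: both the inversion of $i\log^b i$ and the slowly varying tail estimate hinge on the same asymptotic $\log i^\star=(1+o(1))\log(1/\tau)$, and a careless treatment of the iterated logarithm would corrupt the power of $\log(1/\tau)$ in the final bound.
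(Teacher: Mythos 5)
Your proposal is correct, and it follows the only natural route: the paper itself gives no proof of this corollary in the appendix, but the statement is clearly intended to follow by evaluating the quantity $\#\{f_i\ge\tau\}+\tau^{-2}\sum_{f_i<\tau}f_i^2$ from Corollary~\ref{thm:cs:size} for each profile, which is exactly what you do. Your case-by-case estimates (including the careful inversion of $i\log^b i=1/\tau$ in profile~3 and the piecewise treatment of profile~4, where ``suffices'' only requires the upper bound $g(\tau)\lesssim 1/\tau$) are all sound.
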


\textbf{A Two-Phase Method for Hyperparameter Setup.}
Corollary \ref{thm:cs:size} allows to use $\cs$ with a smaller width for an easier single-round FFE problem, saving communication bandwidth.
However, the sketch size formula given by \eqref{eq:cs:width} in Corollary \ref{thm:cs:size} relies on crucial information of the frequency $(f_i)_{i\ge 1}$, i.e., $\#\{f_i: f_i \ge \tau\}$ and $\sum_{f_i<\tau} f_i^2$, which are unknown to who sets the sketch size. Thus, it is unclear if and how these gains can be realized in practical deployments.

We resolve this quandary by observing that in practice, the frequency vector often follows Zipf's law \citep{cevher2009learning,powers-1998-applications}.
This motives us to conservatively model the global frequency vector by a polynomial with parameters. 
By doing so, we can first run a small $\cs$ to collect data from a (randomly sampled) fraction of the clients for estimating the parameters. 
Then based on the estimated parameter, we can set up an appropriate sketch size for a $\cs$ to solve the FFE problem.
This two-phase method is formally stated as follows.


We approximate the (sorted) global frequency vector $(f^*_i)_{i=1}^d$ by a polynomial \citep{cevher2009learning} with two parameters $\alpha>0$ and $\beta>0$, such that 
\[
f_i^* \approx \poly{i; \alpha,\beta},\ \ 
\poly{i; \alpha,\beta} :=
\begin{cases}
\beta \cdot i^{-\alpha}, & i\le i^*; \\
0, & i> i^*,
\end{cases}
\]
where $i^*:= \max\{i: \sum_{j=1}^{i}\beta \cdot j^{-\alpha}  \le  1\}$ is set such that $\poly{i; \alpha,\beta}$ is a valid frequency vector. 
Here's an executive summary of the proposed approach for setting the sketch size. 
\begin{enumerate}[leftmargin=*,nosep]
    \item  Randomly select a subset of clients (e.g., $n/10$ out of $n$ clients.)
    \item Fix a small sketch (e.g., $8\times 200$) and run Algorithm \ref{alg:count-sketch} with the subset of clients to obtain an estimate ($\tilde{f}_i$).
    \item Use the top-$k$ values (e.g., top $20$) from $\tilde{f}_i$ to fit a polynomial with parameter $\alpha$ and $\beta$ (under squared error).
    \item Solve Equation \eqref{eq:cs:size} under the approximation that $f_i^* \approx \beta \cdot i^\alpha$ and output $W$ according to the result.
\end{enumerate}

\begin{figure}[t]
\centering
\begin{tabular}{ccc}
\includegraphics[width=0.3\linewidth]{./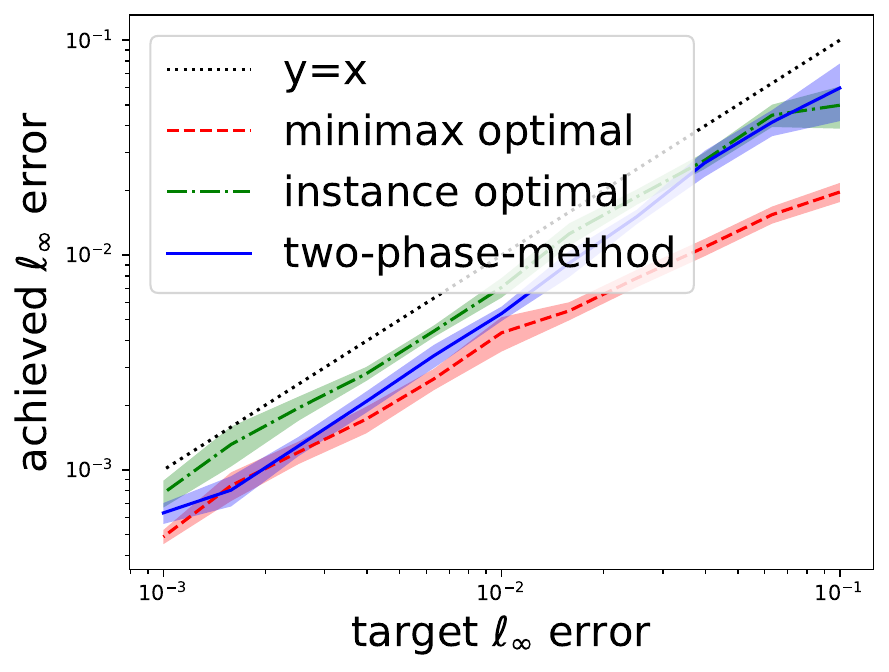} &
\includegraphics[width=0.3\linewidth]{./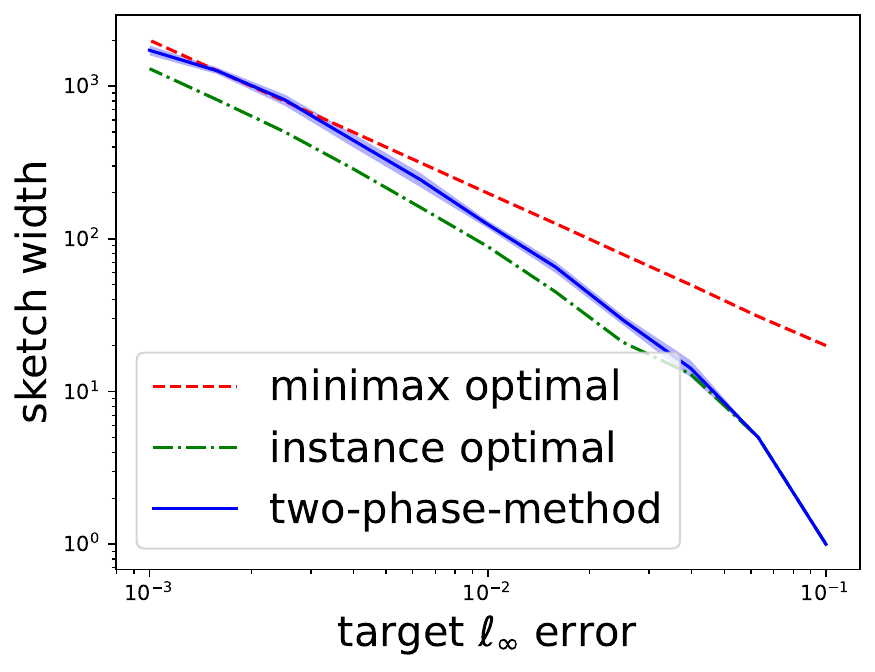} &
\includegraphics[width=0.3\linewidth]{./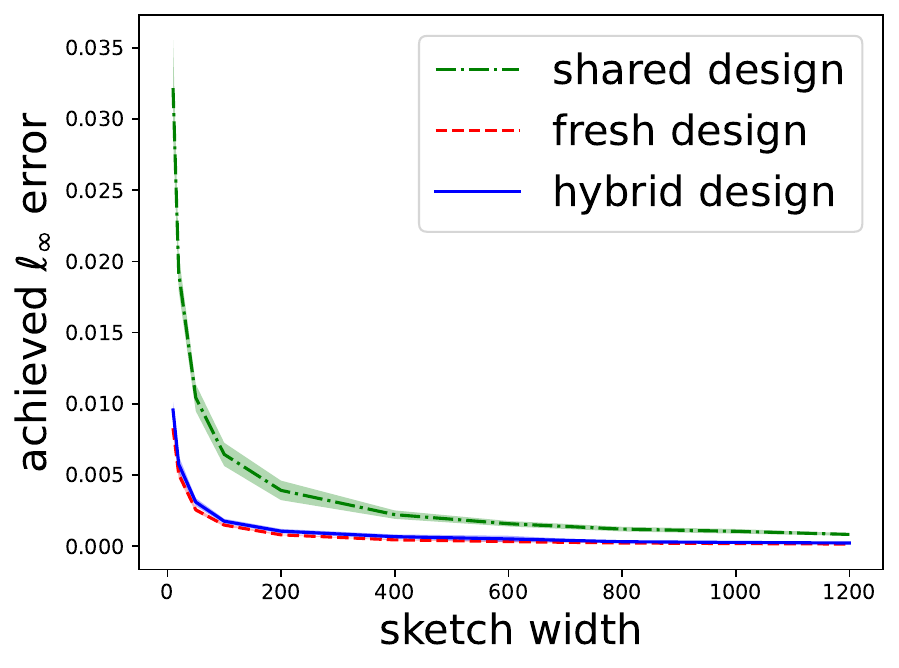} \\
{\small (a) Gowalla, single round} & {\small (b) Gowalla, single round} & {\small (c) Gowalla, multi-round} \\
\includegraphics[width=0.3\linewidth]{./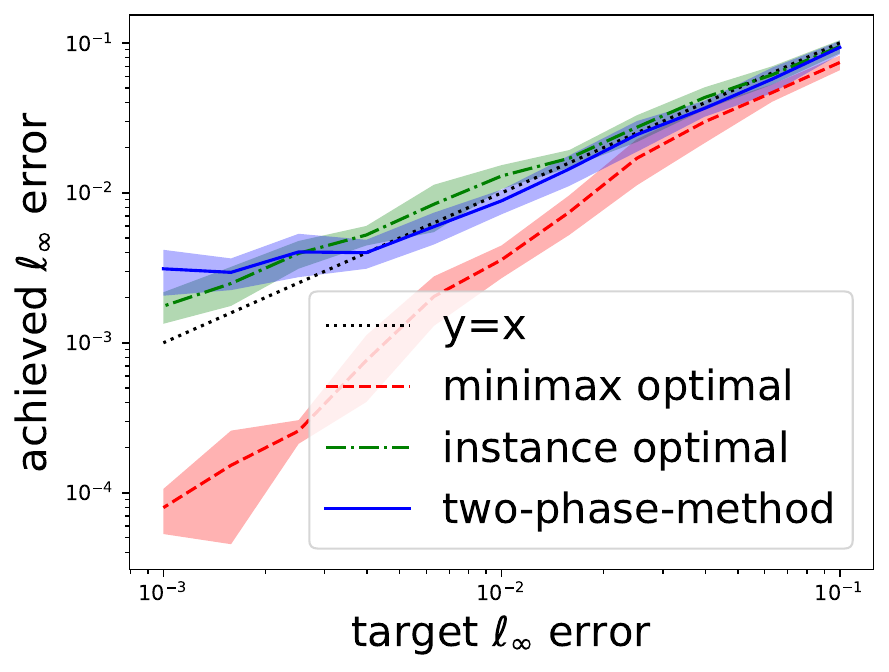} &
\includegraphics[width=0.3\linewidth]{./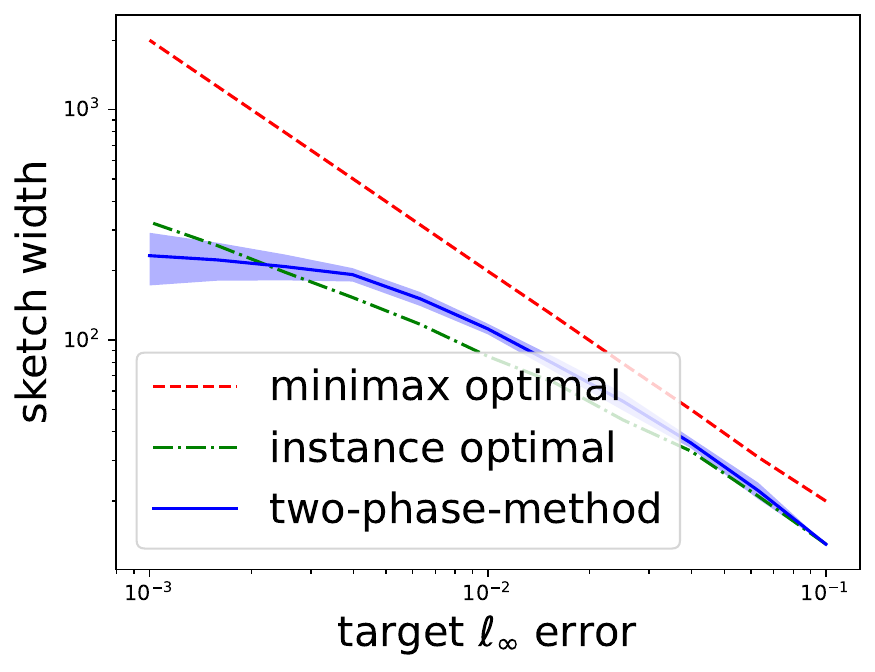} &
\includegraphics[width=0.3\linewidth]{./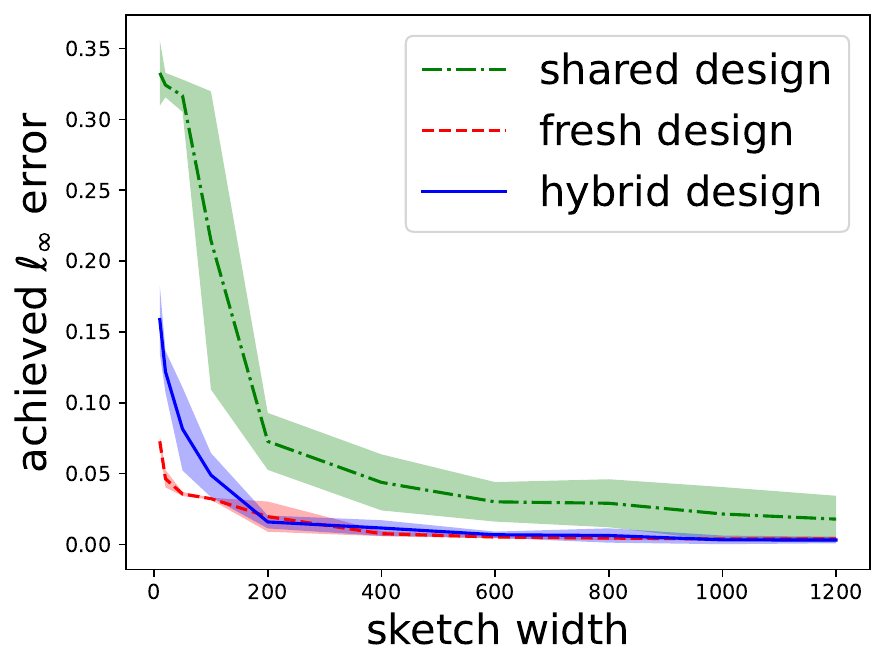} \\
{\small (d) C4, single round} & {\small (e) C4, single round} & {\small (f) C4, multi-round} \\
\includegraphics[width=0.3\linewidth]{./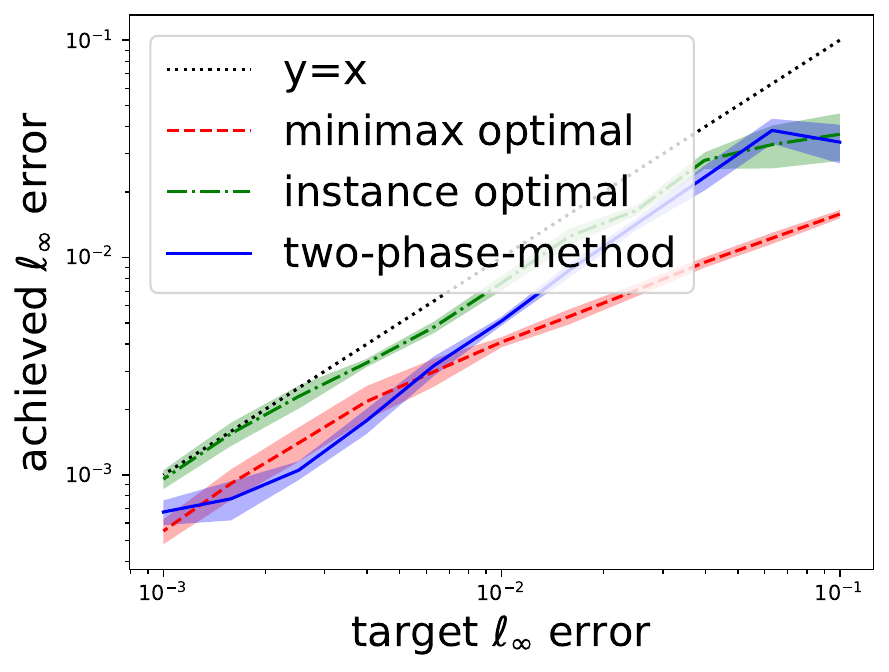} &
\includegraphics[width=0.3\linewidth]{./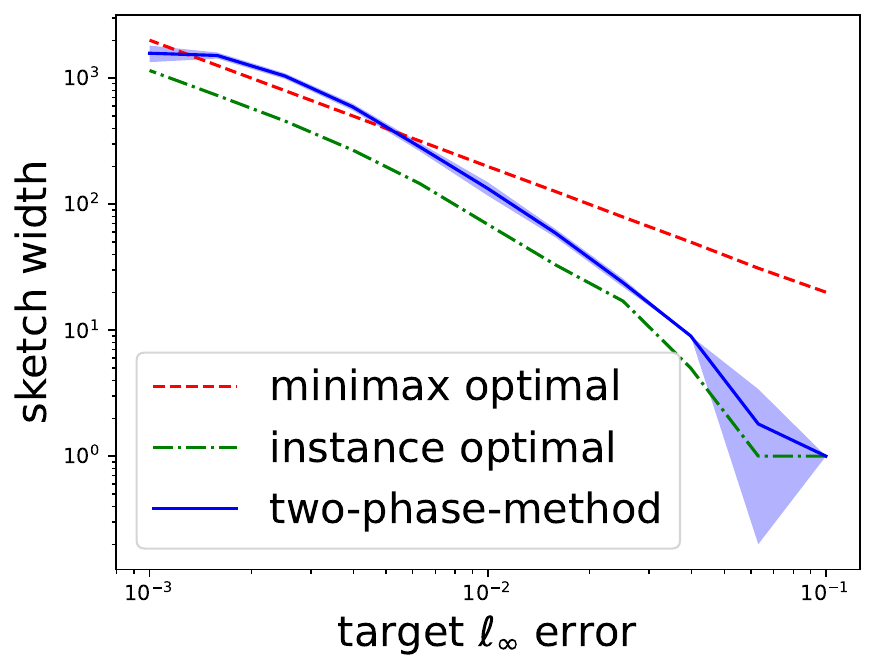} &
\includegraphics[width=0.3\linewidth]{./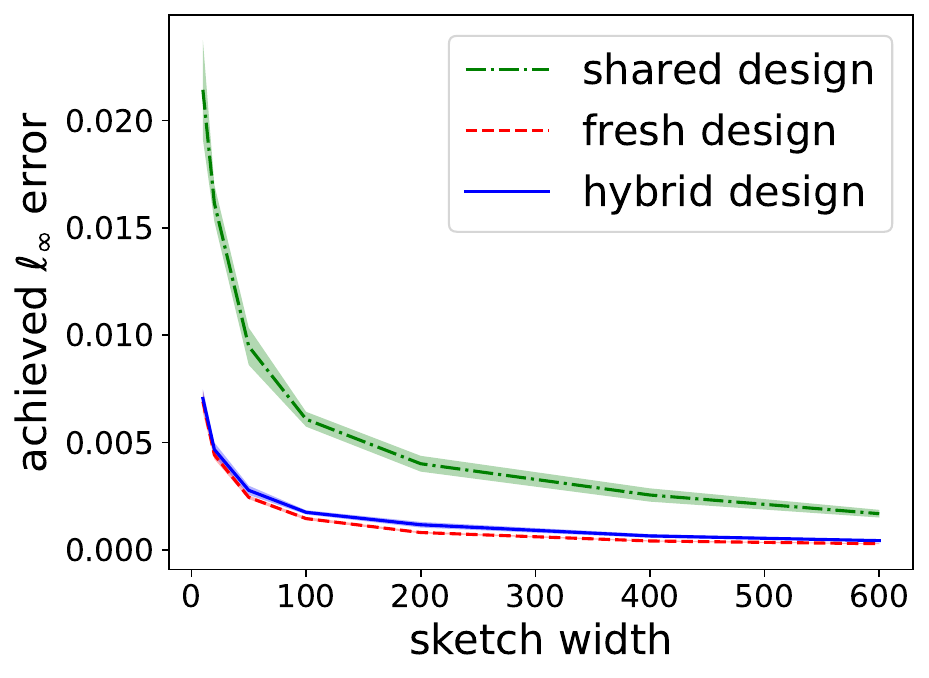} \\
{\small (g) Setiment-140, single round} & {\small (h) Setiment-140, single round} & {\small (i) Setiment-140, multi-round} 
\end{tabular}
\caption{\small Single-round and multi-round FFE simulations.
Subfigures (a) and (b) compare different hyperparameter strategies for $\cs$ in a single-round FFE problem on the Gowalla dataset \citep{cho2011friendship}.
Subfigure (c) compares three sketch methods in a multi-round FFE problem on the Gowalla dataset.
Subfigures (d), (e), and (f) are counterparts of subfigures (a), (b), and (c), respectively, but on the C4 \citep{bowman2020c4} dataset.
Similarly, subfigures (g), (h), and (i) are counterparts of subfigures (a), (b), and (c), respectively, but on the Sentiment-140 \citep{sentiment140} dataset.
}
\vspace{-5mm}
\label{fig:cs:single}
\end{figure}

\textbf{Experiments.}
We conduct three sets of experiments to verify our methods.
In the first set of experiments, we simulate a single-round FFE problem with the Gowalla dataset \citep{cho2011friendship}.
The dataset contains $6,442,892$ lists of location information.
We first construct a domain of size $d=175,000$, which corresponds to a grid over the US map.
Then we sample $n=17,500$ (so $n=d/10$) lists of the location information (that all belong to the domain created) to represent the data of $n$ clients, uniformly at random.
This way, we set up a single-round FFE problem with $n=17,500$ clients in a domain of size $d=175,000$.
In the experiments, we fix the confidence parameter to be $p=0.1$ and the sketch length to be $L = \ln(2d/p) \approx 16$.
The targeted $\ell_\infty$-error $\tau$ is chosen evenly from $(10^{-3},\ 10^{-1})$. We only test $\tau > 20 / n$ because it is less important to estimate frequencies over items with small counts (say, $20$).
For $\cs$, we compute sketch width with three strategies, using \eqref{eq:cs:width} (called ``instance optimal''), using \eqref{eq:cs:width:worst-case} (called ``minimax optimal''), and using the two-phase method.
We emphasize that the ``instance optimal'' method is not a practical algorithm as it requires access to unknown information about the frequency; we use it only for demonstrating the correctness of our theory.  
In the first phase of the two-phase method, we randomly select $n/10$ clients and use a small sketch of size $8\times 200$ to get an estimate of the top $20$ values of the frequency for computing the polynomial parameters $\alpha$ and $\beta$. 
We set all constant factors to be $2$.
The results are presented in Figures \ref{fig:cs:single}(a) and (b).
We observe that the ``minimax optimal'' way of hyperparameter choice is in fact suboptimal in practice, and is improved by the ``instance optimal'' and the two-phase strategies.

In the second set of experiments, we run simulations on the ``Colossal Clean Crawled Corpus'' (C4) dataset \citep{bowman2020c4}, which consists of clean English text scraped from the web. We treat each domain in the dataset as a user and calculate the number of examples each user has. The domain size $d=150,868$, which is the maximum example count per user. We randomly sample $n=15,000$ (so $n\approx d/10$) users from the dataset. We fix 
the sketch length to be $L = 5$. Other parameters are the same as the Gowalla dataset.
The results are presented in Figures \ref{fig:cs:single}(d) and (e), and are consistent with what we have observed in the Gowalla simulations.

In the third set of experiments, 
we run simulations on a Twitter dataset Sentiment-140 \citep{sentiment140}. 
The dataset contains $d = 739,972$ unique words from $659,497$ users.
In the experiments, we randomly sample $n = 65,949$ (so $n$ is in the scale of $d/10$) users as our clients and randomly sample one word from each user as the client data. 
The algorithm setup is the same as in the Gowalla experiments.
Results are provided in Figures \ref{fig:cs:single}(g) and (h), and are consistent with our prior understandings.

\vspace{-2mm}
\section{Sketch Methods for Multi-Round Federated Frequency Estimation}\label{sec:multi-round}
\vspace{-1mm}
In practice, having all clients participate in a single communication round  is usually infeasible due to the large number of devices, their unpredictable availability, and limited server bandwidth \citep{fl-sys}. 
This motivates us to consider a multi-round FFE setting.

\textbf{Multi-Round FFE.}
Consider a FFE problem with $M$ rounds of communication. 
In each round, $n$ clients participate, each holding an item from a universe of size $d$. The items are denoted by $x^{(m)}_t \in [d]$, where $t\in[n]$ denotes the client index and $m\in [M]$ denotes the round index.
For simplicity, we assume in each round a new set of clients participate.
So in total there are $N = Mn$ clients.
Then the frequency of item $j$ is now denoted by 
\[
f_j := \frac{1}{M n}\sum_{m=1}^M\sum_{t=1}^n\ind{x^{(m)}_t = j}.
\]
For the $m$-th round, the local frequency is denoted by 
\(
f_j^{(m)} := \frac{1}{n}\sum_{t=1}^n\ind{x^{(m)}_t = j}.
\)
Clearly, we have 
\(
f_j = \frac{1}{M}\sum_{m=1}^M f_j^{(m)}.
\)
Similarly, we use $\xB^{(m)}_t$ to denote the one-hot representation of $x^{(m)}_t$, i.e., $\xB^{(m)}_t = \eB_{x^{(m)}_t}$ where $(\eB_t)_{t=1}^d$ refers to the canonical basis.
Then the frequency vector can be denoted by 
\(
\fB := (f_1,\dots,f_d)^\top.
\)
The aim is to estimate the frequency vector $\fB$ in a manner that is compatible with $\secsum$.

\textbf{Baseline Method 1: Shared Sketch.}
A multi-round FFE problem can be reduced to a single-round FFE problem with a large communication. 
Specifically, one can apply the $\cs$ with the same randomness for every round; after collecting all the sketches from the $M$ round, one simply averages them.
Due to the linearity of the sketching compress method, this is equivalent to a single round setting with $N=Mn$ clients.
We refer to this method as \emph{count sketch with shared hash design} ($\scs$).

Thanks to the reduction idea, we can obtain the error and sketch size bounds for $\scs$ via applying Corollaries \ref{thm:cs:error} and \ref{thm:cs:size} to $\scs$ by replacing $n$ by $N = Mn$,

\begin{figure}[t]
    \centering
    \subfigure[$f_i\propto i^{-1.1}$ ]{\includegraphics[width=0.32\linewidth]{./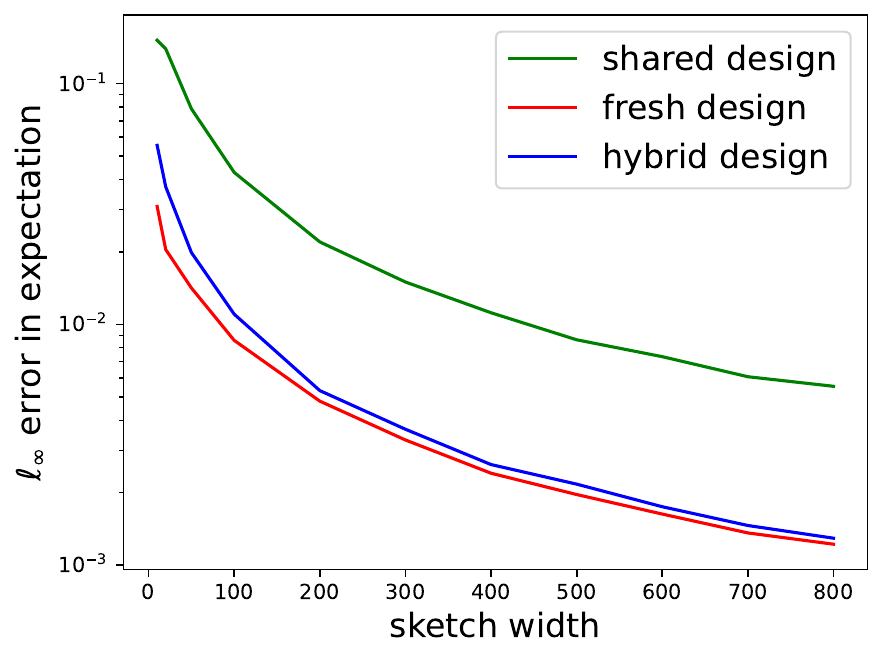}}
        \subfigure[$f_i\propto i^{-3}$ ]{\includegraphics[width=0.32\linewidth]{./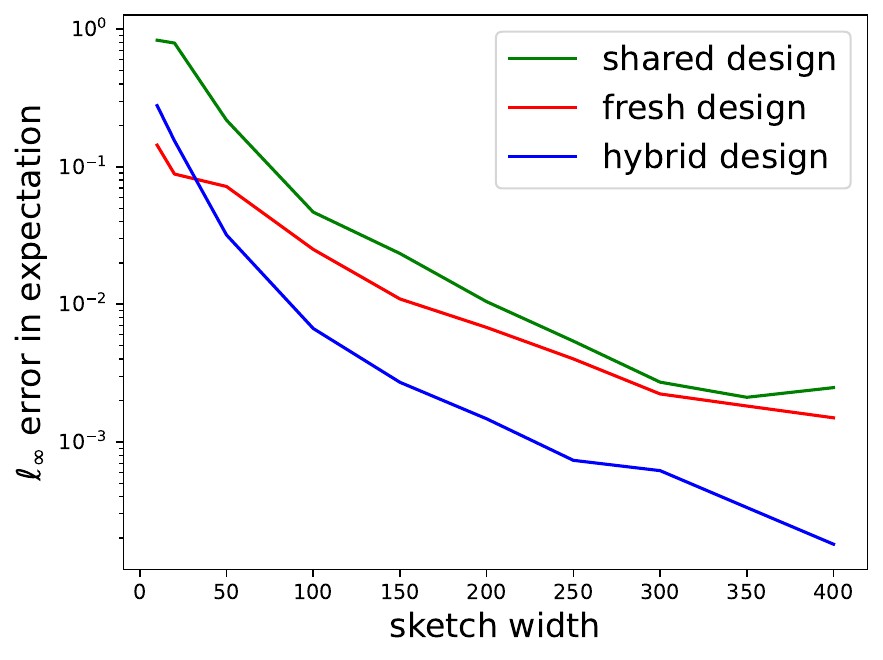}}
        \subfigure[$f_i\propto i^{-5}$ ]{\includegraphics[width=0.32\linewidth]{./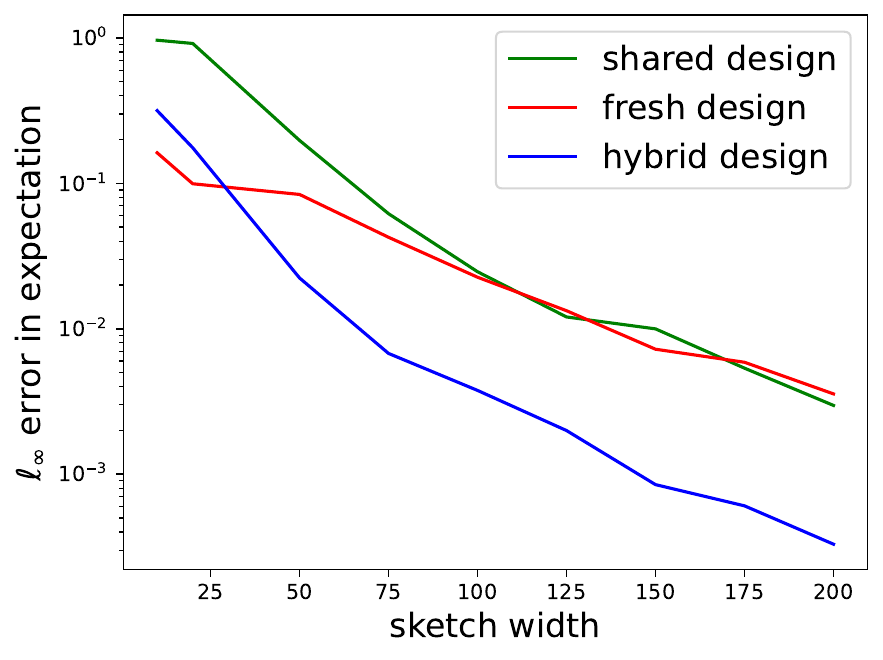}}
    \caption{\small 
    Shared vs. Hybrid vs. Fresh Sketches. We refer the reader to Section \ref{sec:multi-round} for the definitions of the three methods.
    We compute the expected $\ell_\infty$-error for shared/hybrid/fresh sketches for a homogeneous, multi-round FFE problem. The domain size is $d=10^5$. The number of rounds is $M=10$. In all setups, the sketch length is fixed to $L=5$. 
    In every setting, the $\ell_\infty$ error is averaged with $1,000$ random repeats for simulating the expectation.
    In the case when the global frequency vector is a low-degree polynomial, hybrid sketch performs similarly to fresh sketch, and both are better than shared sketch. As long as the global frequency vector is a slightly higher degree polynomial (e.g., with a degree higher than $3$), then hybrid sketch is significantly better than both shared and fresh sketches.
        }
    \label{fig:poly}
\end{figure}

\textbf{Baseline Method 2: Fresh Sketch.}
A multi-round FFE problem can also be broken down to $M$ independent single-round FFE problems. 
Specifically, one can apply \emph{independent} $\cs$ in each round, and decode $M$ local estimators for the $M$ local frequency vectors. As the $\cs$ produces an unbiased estimator, one can show that the average of the $M$ local estimators is an unbiased estimator for the global frequency vector. 
We call this method \emph{count sketch with fresh hash design} ($\fcs$).
Next, we provide a bound for $\fcs$. The proof of which is motivated by \citet{huang2021frequency}.
The following notation will be useful to our presentation: 
\begin{equation}\label{eq:F}
  \text{define}\  F_i := \frac{1}{M} \sqrt{\sum_{m=1}^M \big( f_i^{(m)} \big)^2}
    \
    \text{and let $(F_i^*)_{i\ge 1}$ be $(F_i)_{i\ge 1}$ sorted in non-increasing order.}
\end{equation}
We call $(F_i)_{i=1}^d$ the \emph{heterogeneity} vector, which captures the heterogeneity of the frequency vectors across rounds.
Clearly, it holds that $F_i \le f_i$ for every $i \in [d]$.

\begin{theorem}[Instance-specific bound for $\fcs$]\label{thm:fcs}
Let $(\hat{f}_j)_{j=1}^d$ be estimates produced by $\fcs$.
Then for each $p \in (0,1)$, $W\ge 1$ and $L \ge \log(1/p)$, it holds that: for each $j \in [d]$, with probability at least $1-p$, 
\begin{align*}
    |\hat{f}_j - f_j| < C\cdot \sqrt{ \frac{\log(1/p)\log(M/p)}{L}\cdot\frac{1 }{W}\cdot \sum_{i > W}  (F_i^*)^2  },
\end{align*}
where $C$ is an absolute constant and $(F^*_i)_{i=1}^d$ are defined in \eqref{eq:F}.
\end{theorem}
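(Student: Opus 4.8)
The plan is to reduce the multi-round error to an average of $M$ \emph{independent} per-round errors, turn the per-round tail bound of Proposition~\ref{thm:cs} into almost-sure boundedness, and then concentrate the average with Hoeffding's inequality. Write $\hat f_j^{(m)}$ for the round-$m$ $\cs$ estimate of the local frequency $f_j^{(m)}$, and set $Z_m := \hat f_j^{(m)} - f_j^{(m)}$. Because $\fcs$ draws fresh, independent hashes $h_\ell^{(m)}$ and signs $\sigma_\ell^{(m)}$ in every round, the errors $Z_1,\dots,Z_M$ are mutually independent; and since $\hat f_j = \frac1M\sum_m \hat f_j^{(m)}$ while $f_j = \frac1M\sum_m f_j^{(m)}$, the object to control is exactly $\hat f_j - f_j = \frac1M\sum_{m=1}^M Z_m$.

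First I would record that each $Z_m$ is symmetric about $0$, hence has mean zero. Conditioned on the round-$m$ hashes, each single-row estimate $\dec^{(m)}(j;\ell)$ equals $f_j^{(m)}$ plus a signed sum of the colliding frequencies whose signs are independent Rademacher variables, so it is symmetric about $f_j^{(m)}$; the median over the $L$ i.i.d.\ rows is then symmetric about $f_j^{(m)}$ as well. This is the precise reason why averaging the $M$ rounds accumulates no bias (the unbiasedness anticipated in the text), and it will also let me truncate $Z_m$ later without shifting its mean.

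Next I would convert the per-round guarantee into boundedness. Applying Proposition~\ref{thm:cs} in round $m$ at confidence level $p/(2M)$ — which forces $L \ge \log(2M/p)$ and is where that logarithm is born — gives, with probability at least $1-p/(2M)$,
\[
|Z_m| \le B_m := C\sqrt{\frac{\log(2M/p)}{L}\cdot\frac1W\cdot\sum_{i>W}(f_i^{(m),*})^2},
\]
where $(f_i^{(m),*})_i$ denotes the round-$m$ frequency vector sorted in non-increasing order. A union bound makes the event $\mathcal{G}:=\{|Z_m|\le B_m\ \text{for all}\ m\}$ hold with probability at least $1-p/2$. To obtain an almost-sure bound suitable for Hoeffding I would pass to the symmetric truncations $\tilde Z_m := Z_m\,\ind{|Z_m|\le B_m}$, which remain independent and mean zero (by the symmetry above), are bounded by $B_m$, and agree with $Z_m$ on $\mathcal{G}$. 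Hoeffding's inequality applied to $\tilde Z_1,\dots,\tilde Z_M$ then yields, with probability at least $1-p/2$,
\[
\Big|\frac1M\sum_{m=1}^M \tilde Z_m\Big| < C'\sqrt{\frac{\log(1/p)}{M^2}\sum_{m=1}^M B_m^2} = C''\sqrt{\frac{\log(1/p)\,\log(M/p)}{L}\cdot\frac1W\cdot\frac1{M^2}\sum_{m=1}^M\sum_{i>W}(f_i^{(m),*})^2}.
\]
Intersecting with $\mathcal{G}$ (total failure probability at most $p$) replaces each $\tilde Z_m$ by $Z_m$ and bounds $\frac1M\sum_m Z_m=\hat f_j-f_j$. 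The product $\log(1/p)\log(M/p)$ appears exactly here: one factor is the Hoeffding confidence, the other is carried inside each $B_m$ by the union bound over the $M$ rounds.

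The remaining step, and the step I expect to be the real obstacle, is to identify $\frac1{M^2}\sum_{m=1}^M\sum_{i>W}(f_i^{(m),*})^2$ with the tail $\sum_{i>W}(F_i^*)^2$ of the heterogeneity vector defined in Theorem~\ref{thm:hcs}. This is not automatic, since each inner sum discards the top $W$ coordinates \emph{of its own round} and these discarded sets differ across rounds, so the $M$ per-round sortings must be reconciled with the single sorting defining $(F_i^*)$. I would settle this with the same majorization/sorting lemma that underlies the definition of the heterogeneity vector in the proof of Theorem~\ref{thm:hcs}; the concentration argument above is routine by comparison, and this aggregation bookkeeping is where the care is needed.
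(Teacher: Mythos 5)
Your proposal follows essentially the same route as the paper's proof: apply Proposition~\ref{thm:cs} in each round at confidence level $p/(2M)$ (whence the $\log(M/p)$ factor), union bound to get boundedness of the $M$ independent per-round errors, then Hoeffding on their average (whence the $\log(1/p)$ factor), and finally aggregate the per-round tail sums into the heterogeneity tail. Your symmetry/truncation argument for mean-zeroness is a point the paper's proof glosses over, and is a welcome addition. The one step you defer --- reconciling the $M$ per-round sortings with the single sorting defining $(F_i^*)$ --- needs no majorization lemma: since $\sum_i F_i^2 = \frac{1}{M^2}\sum_m\sum_i (f_i^{(m)})^2$ and $\sum_m \max_{|S_m|=W}\sum_{i\in S_m}(f_i^{(m)})^2 \ge \max_{|S|=W}\sum_{i\in S}\sum_m (f_i^{(m)})^2$ (take every $S_m$ equal to the maximizer on the right), subtracting yields $\frac{1}{M^2}\sum_m\sum_{i>W}(f_i^{(m),*})^2 \le \sum_{i>W}(F_i^*)^2$, which is exactly the direction you need; the paper's own proof silently drops the stars at this point.
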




\begin{algorithm}[ht]
\caption{\textsc{Hybrid Sketch for Federated Frequency Estimation}}\label{alg:hybrid-sketch}
\begin{algorithmic}[1]
\REQUIRE 
The number of rounds $M$.
$N=Mn$ clients with local data $x_t^{(m)}\in [d]$ for $m\in[M]$ and $t\in[n]$. 
Sketch length $L$ and width $W$.
\STATE The server prepares independent hash functions and broadcasts them to each client:
\[ h_\ell :[d] \to [W],\ \sigma_{\ell}^{(m)}: [d] \to \{\pm 1\} \ \text{for}\ \ell \in [L], \ m \in [M]. \]
\FOR{Round $m = 1, \dots, M$ in parallel}
\FOR{Client $t=1,\dots,n$ in parallel}
\STATE Client $(m,t)$ encodes the local data $x_t^{(m)}$ to $\enc^{(m)}\big(x_t^{(m)}\big) \in \Rbb^{L\times W}$ where 
\[
\Big( \enc^{(m)}\big(x_t^{(m)}\big) \Big)_{\ell, k}
= \ind{h_\ell( x_t^{(m)} ) =k } \cdot \sigma_\ell^{(m)}(x_t^{(m)})\ \text{for}\ \ell \in[L],\ k \in [W]. 
\]
\STATE Client $(m,t)$ sends  $\enc^{(m)}\big(x_t^{(m)}\big)$ to $\secsum$.
\ENDFOR
\STATE $\secsum$ receives $\big(\enc^{(m)}(x_t^{(m)}) \big)_{t=1}^n$ and reveals the sum $\sum_{t=1}^n \enc^{(m)}\big(x_t^{(m)}\big)$  to the server.
\ENDFOR
\FOR{Item $j=1,\dots,d$ in parallel}
\STATE Server produces $M\times L$ estimators for $f_j$:
\[ \dec(j;m,l):= \textstyle \sigma_\ell^{(m)}(j) \cdot  \bigg( \frac{1}{n}\sum_{t=1}^n \enc^{(m)}\big(x_t^{(m)}\big) \bigg)_{\ell, h_\ell(j)}\ \text{for}\ m\in[M], \ell \in [L].\]
\STATE 
Server computes the median over $\ell\in[L]$ of the averages over $m\in[M]$ of the estimators: 
\[\textstyle \dec(j):=\median\big\{\frac{1}{M}\sum_{m=1}^M \dec(j;m,l),\ \ell \in [L] \big\}. \]
\ENDFOR
\RETURN $\big(\dec(j)\big)_{j=1}^d$ as estimate to $(f_j)_{j=1}^d$.
\end{algorithmic}
\end{algorithm}

\textbf{Hybrid  Sketch.}
Both $\scs$ and $\fcs$ reduce a multi-round FFE problem into single-round FFE problem(s).
In contrast, we show a more comprehensive sketching method, called \emph{count sketch with hybrid hash design} ($\hcs$), that solves a multi-round FFE problem as a whole. 
$\hcs$ is presented as Algorithm \ref{alg:hybrid-sketch}.
Specifically, $\hcs$ generates $M$ sketches that share a set of bucket hashes but use independent sets of sign hashes. 
Then in the $m$-th communication round, participating clients and the server communicate by the $\cs$ algorithm based on the $m$-th sketch, so 
the server observes the summation of the sketched data through $\secsum$.
After collecting $M$ summations of the sketched local data, the server first computes averages over different rounds for \emph{variance reduction}, then computes the median over different repeats (or sketch rows) for \emph{success probability amplification}.
We provide the following problem-dependent bound for $\hcs$.

\begin{theorem}[Instance-specific bound for $\hcs$]\label{thm:hcs}
Let $(\hat f_j)_{j=1}^d$ be estimates produced by $\hcs$ (see Algorithm \ref{alg:hybrid-sketch}).
Then for each $p \in (0,1)$, $W\ge 1$ and $L \ge \log(1/p)$, it holds that: for each $j \in [d]$, with probability at least $1-p$, 
\begin{align*}
    |\hat f_j - f_j| < C\cdot \sqrt{ \frac{\log(1/p)}{L}\cdot\frac{1 }{W}\cdot \sum_{i > W}  (F_i^*)^2  },
\end{align*}
where $C$ is an absolute constant and $(F^*_i)_{i=1}^d$ are defined in \eqref{eq:F}.
\end{theorem}
We would like to point out that, although our $\hcs$ algorithm is developed for multi-round frequency estimation problems, it can be adapted to multi-round vector recovery problems as well. Hence it could have broader applications in other federated learning scenarios.

\textbf{Hybrid Sketch vs. Fresh Sketch.}
By comparing 
Theorem \ref{thm:hcs} with Theorem \ref{thm:fcs}, we see that, with the same sketch size, the estimation error of $\hcs$ is smaller than that of $\fcs$ by a factor of $\sqrt{\log(M/p)}$.
This provides theoretical insights that $\hcs$ is superior to $\fcs$ in terms of adapting to the instance hardness in multi-round FFE settings. 
This is also verified empirically by Figure \ref{fig:poly}.

\begin{wrapfigure}{r}{0.5\textwidth}
\centering
\includegraphics[width=0.5\textwidth]{./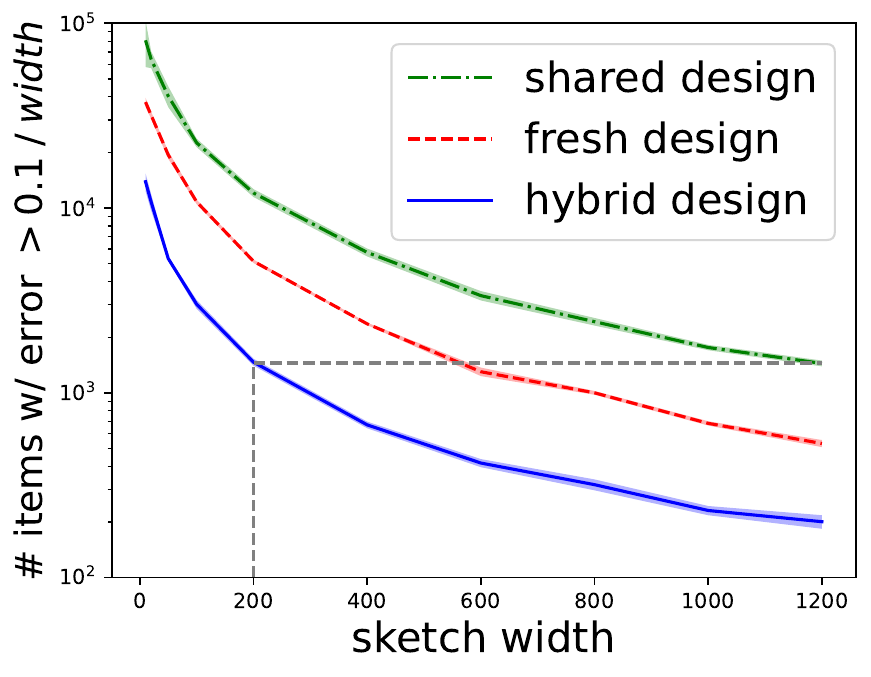}
\vspace{-5mm}
\caption{\small 
    The number of items with error greater than 0.1/width for Shared, Hybrid, and Fresh Sketches with C4 dataset. HybridSketch with a width of 200 achieves roughly the same error as SharedSketch with a width of 1200 and Fresh sketch with a width of 600.
    }
\vspace{-5mm}
\label{fig:c4}
\end{wrapfigure}

\textbf{Hybrid Sketch vs. Shared Sketch.}
We now compare the performance of $\hcs$ and $\scs$ by comparing
Theorem \ref{thm:hcs} 
and Proposition \ref{thm:cs} (under a revision of replacing $n$ with $N=Mn$).
Note that
\begin{equation*}
F_i =  \frac{1}{M} \sqrt{ \sum_{m=1}^{M} \big( f_i^{(m)} \big)^2} 
\le  \frac{1}{M} \sum_{m=1}^{M}  f_i^{(m)}  
= f_i. 
\end{equation*}
So with the same sketch size, $\hcs$ achieves an error that is no worse than $\scs$ in every case.
Moreover, in the \emph{homogeneous} case where all local frequency vectors are equivalent to the global frequency vector, i.e., $\fB^{(m)} \equiv \fB$ for all $m$, then
it holds that 
\(
F_i = f_i / \sqrt{M}.
\)
So in the homogeneous case, $\hcs$ achieves an error that is smaller than that of $\scs$ by a factor of  $1/\sqrt{M}$.
In the general cases, the local frequency vectors are not perfectly homogeneous, then the improvement of $\hcs$ over $\scs$ will depend on the \emph{heterogeneity} of these local frequency vectors.

\textbf{Experiments.}
We conduct four sets of experiments to verify our understandings about these sketches methods for multi-round FFE.

In the first sets of experiments,
we simulate a multi-round FFE problem in homogeneous settings, where in every round the local frequency vectors are exactly the same. 
More specially, we set a domain size $d=10^5$, a number of rounds $M=10$ and test three different cases, where all the local frequency vectors are the same and (hence also the global frequency vector) are proportional to $(i^{-1.1})_{i=1}^d$, $(i^{-2})_{i=1}^d$ and $(i^{-5})_{i=1}^d$, respectively. 
In all the settings, we fix the sketch length to  $L=5$.
In each experiment, we measure the expected $\ell_\infty$-error of each method with the averaging over $1,000$ independent repeats. 
The results are plotted in Figure \ref{fig:poly}.
We can observe that: for low-degree polynomials, $\hcs$ is nearly as good as $\fcs$ and both are better than $\scs$.
But for slightly high degree polynomials (with a degree of $3$), $\hcs$ already outperforms both $\fcs$ and $\scs$. 
The numerical results are consistent with our theoretical analysis.

In the second sets of experiments,
we simulate a multi-round FFE problem with the Gowalla dataset \citep{cho2011friendship}.
Similar to previously, we construct a domain of size $d=175,000$, which corresponds to a grid over the US map.
Then we sample $N=d =175,000$ lists of the location information (that all belong to the domain created) to represent the data of $N$ clients, uniformly at random.
We set the number of rounds to be $M=10$.
In each round, $n=N/M = 17,500$ clients participate.
The results are presented in Figure \ref{fig:cs:single}(c). Here, the frequency and heterogeneity vectors have heavy tails, so $\hcs$ and $\fcs$ perform similarly and both are better than $\scs$.
This is consistent with our theoretical understanding.

In the third sets of experiments,
we run simulations on the C4 \citep{bowman2020c4} dataset. Similar to the single round simulation, the domain size $d=150,868$. We randomly sample $N=150,000$ users from the dataset. The number of rounds $M=10$, and in each round, $n = N/10 = 15,000$ clients participate.  
The results are provided in Figures \ref{fig:cs:single}(f) and \ref{fig:c4}.
Here, the frequency and heterogeneity vectors have moderately light tails, and Figure \ref{fig:c4} already suggests that $\hcs$ produces an estimate that has a better shape than that produced by $\fcs$ and $\scs$, verifying the advantages of $\hcs$.

In the fourth set of experiments, 
we run simulations on a Twitter dataset Sentiment-140 \citep{sentiment140}. 
The dataset contains $d = 739,972$ unique words from $N=659,497$ users. 
We randomly sample one word from each user to construct our experiment dataset. 
The number of rounds $M=10$, and in each round, $n = N/10 = 65,949$ clients participate.
The algorithm setup is the same as in the Gowalla experiments.
Results are provided in Figure \ref{fig:cs:single}(i) and are consistent with our prior understandings.

\section{Differentially Private Sketches}\label{sec:private}
While $\secsum$ provides security guarantees, it does not provide differential privacy guarantees. In this part, we discuss a simple modifications to the sketching algorithms to make them provably differentially private (DP).

\begin{definition}[$(\epsilon,\delta)$-DP \citep{dwork2006calibrating}]
Let $\algo(\cdot)$ be a randomized algorithm that takes a dataset $\Dcal$ as its input.
Let $\Pbb$ be its probability measure.
$\algo(\cdot)$ is $(\epsilon,\delta)$-DP if: for every pair of neighboring datasets $\Dcal$ and $\Dcal'$, it holds that 
\[\Pbb\{\algo(\Dcal) \in \Ecal\} < e^\epsilon\cdot \Pbb\{\algo(\Dcal') \in \Ecal\} + \delta. \]
\end{definition}
In our case, a dataset corresponds to all participated clients (or their data), and two neighboring datasets should be regarded as two sets of clients (local data) that only differ in a single client (local data).
The algorithm refers to all procedures before releasing the final frequency estimate, and all the intermediate computation is considered private and is not released. 

We work with \emph{central DP}, that is, server releases data in a differentially private way while clients do not release data.
We focus on $\hcs$ as a representative algorithm. 
The DP mechanism can also be extended to the other sketching algorithms. 
Specifically, we use a DP mechanism that adds independent Gaussian noise to each entry of the sketching matrix, which is initially proposed for making $\cs$ differentially private by \citet{pagh2022improved,zhao2022differentially}.

We provide the following theorem characterizing the trade-off between privacy and accuracy. 
\begin{theorem}[DP-hybrid sketch]\label{thm:dp:hcs}
Consider a modified  Algorithm \ref{alg:hybrid-sketch}, where we add to each entry of the sketching matrix an independent Gaussian noise, $\Ncal(0, c_0\cdot \sqrt{L \log(1/\delta)} / \epsilon)$, where $c_0>0$ is a known constant. 
Suppose that $L = \log(d/p)$ and $W \ge 2$.
Then the final output of the modified Algorithm \ref{alg:hybrid-sketch}, denoted by $(\hat f_j)_{j=1}^d$, is $(\epsilon,\delta)$-DP for $\epsilon<1$ and $\delta<0.1$. 
Moreover, with probability at most $1-p$, it holds that 
\begin{equation*}
        \max_{j} | \hat f_j  - f_j | 
    < C \cdot \bigg( \sqrt{ \frac{\sum_{i> W }(F^*_i)^2}{W} }+ \frac{\sqrt{\log(d/p)\log(1/\delta)}}{n\sqrt{M}\epsilon} \bigg),
\end{equation*}
where $C>0$ is an absolute constant and $(F^*_i)_{i=1}^d$ are as defined in Theorem \ref{thm:hcs}.
\end{theorem}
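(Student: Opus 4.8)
The plan is to split the argument into a privacy claim and an accuracy claim, observing that the only data-dependent quantities the server ever releases are the $M$ noised sketch sums $\sum_{t=1}^n \enc^{(m)}(x_t^{(m)}) + Z^{(m)}$, where each $Z^{(m)}$ is an independent matrix with i.i.d.\ $\Ncal(0,\sigma^2)$ entries, $\sigma = c_0\sqrt{L\log(1/\delta)}/\epsilon$. The final estimates $(\hat f_j)_{j=1}^d$ are a fixed, data-independent function of these noised sums, so it suffices to certify privacy of the sums and then invoke closure of DP under post-processing.

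For the \textbf{privacy} claim, I would first bound the $\ell_2$ sensitivity of the concatenated sketch sums with respect to changing a single client. Since each client participates in exactly one round, and its one-hot encoding contributes exactly one $\pm 1$ entry per row of that round's sketch, replacing one client alters only one round's sketch, and within it at most two entries per row; the resulting $\ell_2$ change is therefore at most $2\sqrt{L}$. With $\ell_2$ sensitivity $\Delta_2 = \Ocal(\sqrt{L})$, the Gaussian mechanism at standard deviation $\sigma = \Theta\big(\sqrt{L\log(1/\delta)}/\epsilon\big)$ certifies $(\epsilon,\delta)$-DP in the regime $\epsilon<1$, $\delta<0.1$, and post-processing extends this to the released estimates.

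For the \textbf{accuracy} claim, I would track how the injected noise propagates to the estimator of Algorithm \ref{alg:hybrid-sketch}. Writing the noiseless $\ell$-th pre-median estimator as $E_\ell(j) := \frac{1}{M}\sum_{m=1}^M \dec(j;m,\ell)$, the normalization by $1/n$ together with the averaging over the $M$ rounds makes the total noise contribution to this estimator equal to $N_\ell := \frac{1}{Mn}\sum_{m=1}^M \sigma_\ell^{(m)}(j)\, Z^{(m)}_{\ell,h_\ell(j)}$, a sum of $M$ independent Gaussians, hence $N_\ell \sim \Ncal\big(0,\, \sigma^2/(Mn^2)\big)$. The crucial $1/\sqrt{M}$ reduction of the noise standard deviation comes precisely from averaging independent per-round noise; substituting $\sigma$ and $L=\log(d/p)$ gives $|N_\ell| \le C''\sqrt{\log(d/p)\log(1/\delta)}/(n\sqrt{M}\,\epsilon)$ with constant probability, which is exactly the second additive term in the bound. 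I would then recombine this with the noiseless analysis at the level of individual rows: extracting from the proof of Theorem \ref{thm:hcs} that each $E_\ell(j)$ lies within $T := C'\sqrt{\frac{1}{W}\sum_{i>W}(F_i^*)^2}$ of $f_j$ with constant probability, and inflating constants so that this event and the event on $|N_\ell|$ each hold with probability at least $7/8$, a union bound gives $|E_\ell(j)+N_\ell - f_j| < T + C''\sigma/(n\sqrt{M})$ with probability at least $3/4$ for each row. Since the bucket/sign hashes and the noise are independent across the $L$ rows, these events are independent, so a Chernoff bound on the number of good rows shows the median over $\ell$ stays within $T + C''\sigma/(n\sqrt{M})$ except with probability $e^{-cL}$; taking $L=\log(d/p)$ and union bounding over the $d$ items bounds the total failure probability by $p$. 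Collecting the two additive terms yields the claim (I read the statement's ``with probability at most $1-p$'' as ``at least $1-p$'').

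The \textbf{main obstacle} is this per-row recombination: because the noise is injected before the median, I cannot black-box Theorem \ref{thm:hcs} as the median over $\ell$ does not commute with adding $N_\ell$, so I must re-run the median-of-means argument at the row level and verify that the combined per-row ``good'' probability stays above $1/2$ with a margin, while correctly tracking the $1/\sqrt{M}$ variance reduction that produces the $1/(n\sqrt{M})$ rather than $1/n$ dependence. A secondary point is matching the Gaussian-mechanism constant to the $\Theta(\sqrt{L})$ sensitivity so that the prescribed $\sigma$ genuinely certifies $(\epsilon,\delta)$-DP in the stated parameter regime.
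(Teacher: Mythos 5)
Your proposal is correct and follows essentially the same route as the paper's proof: bound the $\ell_2$ sensitivity of the concatenated sketch sums by $\Theta(\sqrt{L})$ (equivalently $\sqrt{L}/n$ after normalization), invoke the Gaussian mechanism and post-processing for privacy, and for accuracy track the injected noise through the round-averaging to get a per-row Gaussian term of standard deviation $\sigma/(n\sqrt{M})$, recombined with the tail-noise bound from the proof of Theorem \ref{thm:hcs} at the row level before the median and amplified via Chernoff plus a union bound over items. Your explicit observation that the median does not commute with the added noise, forcing the recombination to happen per row, is exactly what the paper does (if stated there only implicitly), and your constant-tracking is if anything more careful than the paper's.
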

It is worth noting that if the number of clients per round ($n$) is fixed, then a larger number of rounds $M$ improves both the estimation error and the DP error in non-worst cases, e.g., when the local frequency vectors are nearly homogenous.
However, if the total number of clients ($N=Mn$) is fixed, then a larger number of rounds $M$ improves the estimation error but makes the DP error worse.

When $M=1$, Theorem \ref{thm:dp:hcs} recovers the bounds for differentially private $\cs$ in \citet{pagh2022improved,zhao2022differentially} and Theorem 5.1 in \citet{chen2022communication}.
Moreover, \citet{chen2022communication} shows that in single-round FFE, for any algorithm that achieves an $\ell_\infty$-error smaller than $\tau := \Ocal(\sqrt{\log(d) \log(1/\delta)} / (n\epsilon) )$, in the worse case, each client must communicate $\Omega(n\cdot \min\{ \sqrt{\log (d) / \log(1/\delta)}, \log(d) \})$ bits (see Their Corollary 5.1).
In comparison, According to Theorem \ref{thm:dp:hcs} and Corollary \ref{thm:cs:size}, the differentially private $\cs$ can achieve an $\ell_\infty$-error smaller than $\tau$ with length $L\eqsim \log(d)$ and width 
\begin{align*}
        W = C\cdot \min\bigg\{  \Big( \#\{f_i: f_i \ge \tau\} + \frac{1}{\tau^2}\cdot \sum_{f_i < \tau} f_i^2 \Big),\ n\bigg\} \le C\cdot \min \{ 2/\tau, n \},
\end{align*}
resulting in a per-client communication of $\Ocal (WL \log(n) )$ bits, which matches the minimax lower bound in \citet{chen2022communication} ignoring a $\log(n)$ factor, but could be much smaller in non-worst cases where $(f_i)_{i=1}^d$ decays fast.

Finally, we remark that the trade-off between the estimation error and the DP error in Theorem \ref{thm:dp:hcs} might not be optimal. On the one hand, the DP error in Theorem \ref{thm:dp:hcs} scales with $O(1/(n\sqrt{M}))$ instead of $O(1/(nM))$. This is because we release $M$ sketched matrices, which might not be necessary for achieving central-DP. We conjecture that our DP mechanism is improvable.
On the other hand, one might be able to design multi-round FEE sketching methods that are easier to make differentially private. For instance, one can divide the $M$ rounds into $m\le M$ batches and use a shared sketch within each batch but use a hybrid sketch across batches. This modified sketching method can be viewed as applying a hybrid sketch in a multi-round FFE setting of $m$ rounds and $nM/m$ participating clients per round. So one can directly apply Theorem \ref{thm:dp:hcs} to it. By tuning $m$, one might be able to get a better trade-off between the estimation error and the DP error.
We leave these issues as open problems for future investigation.

\section{Concluding Remarks}

We make several novel extensions to the count sketch method for federated frequency estimation with one or more communication rounds.
In the single round setting, we show that count sketch can achieve better communication efficiency when the underlying problem is simpler. 
We provide a two-phase approach to automatically select a sketch size that adapts to the hardness of the problem.
In the multiple rounds setting, we show a new sketching method that provably achieves better accuracy than simple adaptions of count sketch. 
Finally, we adapt the Gaussian mechanism to make the hybrid sketching method differentially private. 

We remark that the improvement of the instance-dependent method relies on the assumptions that the underlying frequency has a lighter tail, which might be unverifiable a priori due to constraints, e.g., limited communication and privacy budget. 
Finally, this work focuses on an offline setting where the frequency is considered to be fixed. Extending our results to an online setting where the frequency is varying is an interesting future direction.

\section*{Acknolwdgement}
We thank the anonymous reviewers for their helpful comments.
We thank Brendan McMahan for insightful discussions during the project.
VB has been partially supported by National Science Foundation Awards 2244899 and 2333887 and the ONR award N000142312737.

\bibliography{refs.bib}

\begin{thebibliography}{18}
\providecommand{\natexlab}[1]{#1}
\providecommand{\url}[1]{\texttt{#1}}
\expandafter\ifx\csname urlstyle\endcsname\relax
  \providecommand{\doi}[1]{doi: #1}\else
  \providecommand{\doi}{doi: \begingroup \urlstyle{rm}\Url}\fi

\bibitem[Bonawitz et~al.(2016)Bonawitz, Ivanov, Kreuter, Marcedone, McMahan, Patel, Ramage, Segal, and Seth]{bonawitz2016practical}
Kallista Bonawitz, Vladimir Ivanov, Ben Kreuter, Antonio Marcedone, H~Brendan McMahan, Sarvar Patel, Daniel Ramage, Aaron Segal, and Karn Seth.
\newblock Practical secure aggregation for federated learning on user-held data.
\newblock \emph{arXiv preprint arXiv:1611.04482}, 2016.

\bibitem[Bonawitz et~al.(2019)Bonawitz, Eichner, Grieskamp, Huba, Ingerman, Ivanov, Kiddon, Kone{\v{c}}n{\'y}, Mazzocchi, McMahan, Overveldt, Petrou, Ramage, and Roselander]{fl-sys}
Kallista Bonawitz, Hubert Eichner, Wolfgang Grieskamp, Dzmitry Huba, Alex Ingerman, Vladimir Ivanov, Chlo{\'{e}} Kiddon, Jakub Kone{\v{c}}n{\'y}, Stefano Mazzocchi, Brendan McMahan, Timon~Van Overveldt, David Petrou, Daniel Ramage, and Jason Roselander.
\newblock Towards federated learning at scale: System design.
\newblock In \emph{MLSys}. mlsys.org, 2019.

\bibitem[Bonawitz et~al.(2021)Bonawitz, Kairouz, McMahan, and Ramage]{bonawitz2021federated}
Kallista Bonawitz, Peter Kairouz, Brendan McMahan, and Daniel Ramage.
\newblock Federated learning and privacy: Building privacy-preserving systems for machine learning and data science on decentralized data.
\newblock \emph{Queue}, 19\penalty0 (5):\penalty0 87--114, 2021.

\bibitem[Bowman et~al.(2020)Bowman, Angeli, Jain, Kaplan, Dhariwal, Neelakantan, Shlens, and Amodei]{bowman2020c4}
Samuel~R. Bowman, Gabriel Angeli, Siddharth Jain, Jared Kaplan, Prafulla Dhariwal, Saurabh Neelakantan, Jonathon Shlens, and Dario Amodei.
\newblock C4: Colossal clean crawled corpus.
\newblock \emph{arXiv preprint arXiv:2005.14165}, 2020.

\bibitem[Cevher(2009)]{cevher2009learning}
Volkan Cevher.
\newblock Learning with compressible priors.
\newblock \emph{Advances in Neural Information Processing Systems}, 22, 2009.

\bibitem[Charikar et~al.(2002)Charikar, Chen, and Farach-Colton]{charikar2002finding}
Moses Charikar, Kevin Chen, and Martin Farach-Colton.
\newblock Finding frequent items in data streams.
\newblock In \emph{International Colloquium on Automata, Languages, and Programming}, pages 693--703. Springer, 2002.

\bibitem[Chen et~al.(2021)Chen, Choquette-Choo, and Kairouz]{chen2021communication}
Wei-Ning Chen, Christopher~A Choquette-Choo, and Peter Kairouz.
\newblock Communication efficient federated learning with secure aggregation and differential privacy.
\newblock In \emph{NeurIPS 2021 Workshop Privacy in Machine Learning}, 2021.

\bibitem[Chen et~al.(2022)Chen, {\"O}zg{\"u}r, Cormode, and Bharadwaj]{chen2022communication}
Wei-Ning Chen, Ayfer {\"O}zg{\"u}r, Graham Cormode, and Akash Bharadwaj.
\newblock The communication cost of security and privacy in federated frequency estimation.
\newblock \emph{arXiv preprint arXiv:2211.10041}, 2022.

\bibitem[Cho et~al.(2011)Cho, Myers, and Leskovec]{cho2011friendship}
Eunjoon Cho, Seth~A Myers, and Jure Leskovec.
\newblock Friendship and mobility: user movement in location-based social networks.
\newblock In \emph{Proceedings of the 17th ACM SIGKDD international conference on Knowledge discovery and data mining}, pages 1082--1090, 2011.

\bibitem[Dwork et~al.(2006)Dwork, McSherry, Nissim, and Smith]{dwork2006calibrating}
Cynthia Dwork, Frank McSherry, Kobbi Nissim, and Adam Smith.
\newblock Calibrating noise to sensitivity in private data analysis.
\newblock In \emph{Theory of Cryptography: Third Theory of Cryptography Conference, TCC 2006, New York, NY, USA, March 4-7, 2006. Proceedings 3}, pages 265--284. Springer, 2006.

\bibitem[Go et~al.(2009)Go, Bhayani, and Huang]{sentiment140}
Alec Go, Richa Bhayani, and Lei Huang.
\newblock Twitter sentiment classification using distant supervision.
\newblock \emph{CS224N Project Report, Stanford}, 1\penalty0 (12), 2009.

\bibitem[Huang et~al.(2021)Huang, Qiu, Yi, and Cormode]{huang2021frequency}
Ziyue Huang, Yuan Qiu, Ke~Yi, and Graham Cormode.
\newblock Frequency estimation under multiparty differential privacy: One-shot and streaming.
\newblock \emph{arXiv preprint arXiv:2104.01808}, 2021.

\bibitem[Minton and Price(2014)]{minton2014improved}
Gregory~T Minton and Eric Price.
\newblock Improved concentration bounds for count-sketch.
\newblock In \emph{Proceedings of the twenty-fifth annual ACM-SIAM symposium on Discrete algorithms}, pages 669--686. SIAM, 2014.

\bibitem[Pagh and Thorup(2022)]{pagh2022improved}
Rasmus Pagh and Mikkel Thorup.
\newblock Improved utility analysis of private countsketch.
\newblock \emph{arXiv preprint arXiv:2205.08397}, 2022.

\bibitem[Powers(1998)]{powers-1998-applications}
David M.~W. Powers.
\newblock Applications and explanations of {Z}ipf{'}s law.
\newblock In \emph{New Methods in Language Processing and Computational Natural Language Learning}, 1998.
\newblock URL \url{https://aclanthology.org/W98-1218}.

\bibitem[Ramage and Mazzocchi(2020)]{ramage2020fablog}
Daniek Ramage and Stefano Mazzocchi.
\newblock Federated analytics: Collaborative data science without data collection.
\newblock \url{https://ai.googleblog.com/2020/05/federated-analytics-collaborative-data.html}, 2020.

\bibitem[Zhao et~al.(2022)Zhao, Qiao, Redberg, Agrawal, Abbadi, and Wang]{zhao2022differentially}
Fuheng Zhao, Dan Qiao, Rachel Redberg, Divyakant Agrawal, Amr~El Abbadi, and Yu-Xiang Wang.
\newblock Differentially private linear sketches: Efficient implementations and applications.
\newblock \emph{arXiv preprint arXiv:2205.09873}, 2022.

\bibitem[Zhu et~al.(2020)Zhu, Kairouz, McMahan, Sun, and Li]{zhu2020federated}
Wennan Zhu, Peter Kairouz, Brendan McMahan, Haicheng Sun, and Wei Li.
\newblock Federated heavy hitters discovery with differential privacy.
\newblock In \emph{International Conference on Artificial Intelligence and Statistics}, pages 3837--3847. PMLR, 2020.

\end{thebibliography}

\newpage
\appendix

\section{Missing Proofs for Section \ref{sec:adaptive}}
\subsection{Proof of Proposition \ref{thm:cs}}
\begin{proof}[Proof of Proposition \ref{thm:cs}]
We refer the reader to Theorem 4.1 in \citet{minton2014improved}.
\end{proof}

\subsection{Proof of Corollary \ref{thm:cs:error}}

\begin{proof}[Proof of Corollary \ref{thm:cs:error}]
From Proposition 1 we know that
\begin{align*}
\text{for every $j\in [d]$},\quad 
    \Pbb\bigg\{ | \dec(j)  - f_j | 
    > C\cdot \sqrt{\frac{\log(1/\delta)}{L}\cdot \frac{1}{W} \cdot \sum_{i> W }(f^*_i)^2} \bigg\} < \delta. 
\end{align*}
By union bound we have 
\begin{align*}
    \Pbb\bigg\{ \text{there exists $j\in [d]$},\ | \dec(j)  - f_j | 
    > C\cdot \sqrt{\frac{\log(1/\delta)}{L}\cdot \frac{1}{W} \cdot \sum_{i> W }(f^*_i)^2} \bigg\} < d\delta. 
\end{align*}
Replacing $\delta$ with $\delta/d$, setting $L = \log(d/\delta)$, and using the definition of $\ell_\infty$-norm, we obtain 
\begin{align*}
    \Pbb\bigg\{ \| \dec(\cdot)  - \fB \|_{\infty} 
    > C\cdot \sqrt{\frac{1}{W} \cdot \sum_{i> W }(f^*_i)^2} \bigg\} < \delta. 
\end{align*}

We next show that:  
\[ \sqrt{\frac{1}{W} \cdot \sum_{i> W }(f^*_i)^2}
\le \frac{1}{W}.\]
To this end, we first show that 
\( f_W^* \le \frac{1}{W}.\)
If not, we must have 
\(\text{for $i=1,\dots, W$},\ f_i^* \ge f_{W}^* > \frac{1}{W},\)
as $(f_i^*)_{i=1}^d$ is sorted in non-increasing order.
Then $\sum_{i}^d f_{i}^* \ge \sum_{i=1}^W f_{i}^* > 1$, which contradicts to the fact that $(f_i^*)_{i=1}^d$ is a frequency vector.
We have shown that $f_W^* \le \frac{1}{W}$, and this further implies that
for any $i\ge W$, $f_i^* \le f_W^* \le \frac{1}{W}$.
Then we can obtain
\begin{align*}
    \sqrt{\frac{1}{W} \cdot \sum_{i> W }(f^*_i)^2}
    \le \sqrt{\frac{1}{W^2} \cdot \sum_{i> W }f^*_i}
    \le \frac{1}{W},
\end{align*}
since $(f_i^*)_{i=1}^d$ is a frequency vector.
We have completed all the proof.
\end{proof}

\subsection{Proof of Corollary \ref{thm:cs:size}}

\begin{proof}[Proof of Corollary \ref{thm:cs:size}]

Define 
\[E(W) := \sqrt{\frac{1}{W}\sum_{i>W}(f_i^*)^2}.\]
We will show the following:
\begin{enumerate}
    \item  If $W \ge \#\{f_i \ge {\tau}\} + \frac{1}{\tau^2}\sum_{f_i < \tau}f_i^2$, then $E(W) \le \tau $.
    \item Moreover, if $E(W) \le \tau $, then $W \ge \half \big( \#\{f_i \ge {\tau}\} + \frac{1}{\tau^2}\sum_{f_i < \tau}f_i^2$\big).
\end{enumerate}
Then Corollary \ref{thm:cs:size} follows by combining Corollary \ref{thm:cs:error} with the above claims.

We first show the first part.
First note that $W\ge \#\{f_i \ge {\tau}\}$ and that $(f_i^*)_{i=1}^d$ is sorted in non-increasing order, so for all $i\ge W$ it holds that $f_i^* < \tau$.
Therefore, 
\[
E(W) := \sqrt{\frac{1}{W}\sum_{i>W}(f_i^*)^2} \le \sqrt{\frac{1}{W}\sum_{f_i<\tau}f_i^2}.
\]
Moreover, note that $W\ge \frac{1}{\tau^2}\sum_{f_i<\tau}f_i^2$, so we further have $E(W) \le \tau$.

To show that second part, we first note that, by definition, 
$E(W) \le \tau$ is equivalent to \[
2W \ge W+\frac{1}{\tau^2} \sum_{i>W} (f_i^*)^2.
\]
Consider the following function
\[
F(k) := k + \frac{1}{\tau^2} \sum_{i>k} (f^*_i)^2,\quad k\ge 1,
\]
one can directly verify that $F(k)$ is minimized at $k^* := \# \{i:f_i \ge \tau\}$; moreover, 
\[
F(k^*) = k^* + \frac{1}{\tau^2}\sum_{i>k^*} (f_i^*)^2 = \#\{f_i \ge \tau\} + \frac{1}{\tau^2} \sum_{f_i<\tau}f_i^2.
\]
Therefore, we have 
\[
2W \ge F(W) \ge F(k^*) = \#\{f_i \ge \tau\} + \frac{1}{\tau^2} \sum_{f_i<\tau}f_i^2.
\]
This completes our proof.
\end{proof}

\section{Missing Proofs for Section \ref{sec:multi-round}}

\subsection{Proof of Theorem \ref{thm:fcs}}
\begin{proof}[Proof of Theorem \ref{thm:fcs}]
The proof is motivated by \citet{huang2021frequency}.

Define the following events
\begin{align*}
    E^{(m)}_j := \bigg\{  | \hat f^{(m)}_j - f^{(m)}_j  |\le C\cdot \sqrt{\frac{\log(1/p)}{L} \cdot \frac{1}{W} \cdot \sum_{i>W} \big(f_i^{(m)}\big)^2 } \bigg\},\ m\in [M], j\in [d].
\end{align*}
Then by Proposition \ref{thm:cs} we have 
\begin{align*}
    \Pbb\big\{ E^{(m)}_j \big\} \ge 1-p.
\end{align*}
Then by union bound, we have
\begin{align*}
    \Pbb\bigg\{\bigcap_{m=1}^M E^{(m)}_j \bigg\} \ge 1-Mp.
\end{align*}
Conditional on the event of $\bigcap_{m=1}^M E^{(m)}_j$, 
we know that every random variable $\hat f^{(m)}_j - f^{(m)}_j $ is bounded within 
\[
\big(- F^{(m)},\ F^{(m)} \big),
\]
where 
\[F^{(m)} :=  C\cdot \sqrt{\frac{\log(1/p)}{L} \cdot \frac{1}{W} \cdot \sum_{i>W} \big(f_i^{(m)}\big)^2 }. \]
So by Hoeffding inequality, we have 
\begin{align*}
    \Pbb\Bigg\{  \bigg| \frac{1}{M} \sum_{m=1}^M \hat f^{(m)}_j - \frac{1}{M} \sum_{m=1}^M f^{(m)}_j  \bigg| \le \sqrt{\frac{\log(2/p_1)}{2M^2}\sum_{m=1}^M \big(F^{(m)}\big)^2 }\  \Bigg| \  \bigcap_{m=1}^M E^{(m)}_j \Bigg\} \ge 1-p_1
\end{align*}
Then we have 
\begin{align*}
    \Pbb\Bigg\{  \bigg| \frac{1}{M} \sum_{m=1}^M \hat f^{(m)}_j - \frac{1}{M} \sum_{m=1}^M f^{(m)}_j  \bigg| \le \sqrt{\frac{\log(2/p_1)}{2M^2}\sum_{m=1}^M \big(F^{(m)}\big)^2 } \Bigg\} \ge 1-p_1 - M p.
\end{align*}
Note that 
\begin{align*}
    {\frac{\log(2/p_1)}{2M^2}\sum_{m=1}^M \big(F^{(m)}\big)^2 } 
    &= \frac{\log(2/p_1)}{2M^2}\sum_{m=1}^M C^2 \cdot \frac{\log(1/p)}{L} \cdot \frac{1}{W} \cdot \sum_{i>W} \big(f_i^{(m)}\big)^2 \\
    &= C^2\cdot \frac{\log(2/p_1)\log(1/p)}{2L} \cdot \frac{1}{W} \cdot \sum_{i>W} \big(F_i\big)^2 
\end{align*}
So we have 
\begin{align*}
    & \Pbb\Bigg\{  \bigg| \frac{1}{M} \sum_{m=1}^M \hat f^{(m)}_j - \frac{1}{M} \sum_{m=1}^M f^{(m)}_j  \bigg| \le \sqrt{C^2\cdot \frac{\log(2/p_1)\log(1/p)}{2L} \cdot \frac{1}{W} \cdot \sum_{i>W} \big(F_i\big)^2  } \Bigg\} \\
    & \ge 1-p_1 - M p.
\end{align*}
Note replace $p_1 = p'/2$ and $p = p' / (2M)$, we have that 
\begin{align*}
    \Pbb\Bigg\{  \bigg| \frac{1}{M} \sum_{m=1}^M \hat f^{(m)}_j - \frac{1}{M} \sum_{m=1}^M f^{(m)}_j  \bigg| \le \sqrt{ C' \cdot \frac{\log(1/p')\log(M/p')}{L} \cdot \frac{1}{W} \cdot \sum_{i>W} \big(F_i\big)^2}  \Bigg\} \ge 1-p'.
\end{align*}
\end{proof}

\subsection{Proof of Theorem \ref{thm:hcs}}\label{append:sec:hcs:error-bound}

\begin{proof}[Proof of Theorem \ref{thm:hcs}]
Let us consider the hybrid sketch approach in Algorithm \ref{alg:hybrid-sketch}. 
Recall that within a round, clients use the same set of hash functions to construct their sketching matrices. 
Across different rounds, clients use the same set of location hashes but a fresh set of sign hashes.
Denote the hash functions by:
\begin{align*}
    h_{\ell} &:[d] \to [w],\quad \ell = 1,\dots, L; \\
    \sigma_{\ell}^{(m)} &: [d] \to \{+1, -1\},\quad \ell=1,\dots,L;\ m=1\dots, M.
\end{align*}
Recall the local frequency in each round is defined by
\[
\fB^{(m)} := \frac{1}{n}\sum_{t=1}^n \xB^{(m,t)},\quad m=1,\dots,M.
\]
And the global frequency vector is defined by
\[
\fB := \frac{1}{M} \sum_{m=1}^M \fB^{(m)}.
\]
Then according to the communication protocol, the server receives $M$ sketching matrices (each corresponds to a summation of clients' sketches within the same round). 
From the $m$-th sketch, we can extract $L$ estimators for each index $j \in [d]$, i.e., 
\begin{align*}
\tilde{\fB}^{(m, \ell)}_j 
&:= \sum_{i=1}^d \ind{h_\ell(i) = h_\ell (j)} \cdot \sigma^{(m)}_\ell (j) \cdot \sigma^{(m)}_\ell (i) \cdot \fB_i^{(m)},\qquad j \in [d],\ m \in [M],\  \ell\in[L] \\
&= \fB_j^{(m)} + \sum_{i\neq j} \ind{h_\ell(i) = h_\ell (j)} \cdot \sigma^{(m)}_\ell (j) \cdot \sigma^{(m)}_\ell (i) \cdot \fB_i^{(m)}.
\end{align*}
For each index, we will first average the estimators from different rounds to reduce the variance, then take the median over different rows to amplify the success probability. 
In particular, denote the round-wise averaging by 
\begin{align}
    \tilde{\fB}^{(\ell)}_j &:= \frac{1}{M} \sum_{m=1}^M \tilde{\fB}^{(m, \ell)}_j ,\qquad j \in [d],\ \ell \in [L] \notag \\
    &= \frac{1}{M} \sum_{m=1}^M \fB_j^{(m)} + \frac{1}{M} \sum_{m=1}^M \sum_{i\neq j} \ind{h_\ell(i) = h_\ell (j)} \cdot \sigma^{(m)}_\ell (j) \cdot \sigma^{(m)}_\ell (i) \cdot \fB_i^{(m)} \notag  \\
    &=   \underbrace{\fB_j}_{\signal} + \underbrace{\frac{1}{M} \sum_{i\neq j} \ind{h_\ell(i) = h_\ell (j)} \cdot \sum_{m=1}^M \sigma^{(m)}_\ell (j) \cdot \sigma^{(m)}_\ell (i) \cdot \fB_i^{(m)}}_{\noise}\notag \\
    &=  \underbrace{\fB_j}_{\signal} + \underbrace{\frac{1}{M} \sum_{i\neq j, i \in \Wbb} \ind{h_\ell(i) = h_\ell (j)} \cdot \sum_{m=1}^M \sigma^{(m)}_\ell (j) \cdot \sigma^{(m)}_\ell (i) \cdot \fB_i^{(m)}}_{\headnoise} \notag \\
    &\qquad\qquad + \underbrace{\frac{1}{M} \sum_{i\neq j, i \notin \Wbb} \ind{h_\ell(i) = h_\ell (j)} \cdot \sum_{m=1}^M \sigma^{(m)}_\ell (j) \cdot \sigma^{(m)}_\ell (i) \cdot \fB_i^{(m)}}_{\tailnoise}. \label{eq:freq-est:hybrid-row-estimator}
\end{align}
Then we take the median over these estimators to obtain 
\[
\tilde{\fB}_j := \median \{ \tilde{\fB}^{(\ell)}_j, \ \ell \in [L] \}, \quad j\in[d].
\]

\paragraph{Head Noise.}
The only randomness comes from the algorithm. 
Note that the head noise contains at most $|\Wbb| \le 0.1 W$ independent terms, and each is zero with probability $1 - 1/W$. Thus the head noise is zero with probability at least $(1 - {1}/{W})^{|\Wbb|}  \ge (1 - {1}/{W})^{0.1W} \ge 0.9$ provided that $W > 10$.

\paragraph{Tail Noise.}
Now consider the second noise term in \eqref{eq:freq-est:hybrid-row-estimator}. 
Fixing $\ell$ and $j$.
Define 
\begin{align*}
    \xi_{i}^{(m)} &:= \sigma^{(m)}_\ell (j) \cdot \sigma^{(m)}_\ell (i) \cdot \fB_i^{(m)} \\
    \xi_i &:=  \sum_{m=1}^{M} \xi_{i}^{(m)}
    =  \sum_{m=1}^{M} \sigma^{(m)}_\ell (j) \cdot \sigma^{(m)}_\ell (i) \cdot \fB_i^{(m)} \\
    \eta_i &:= \ind{h(i) = h(j)}  \\
    \tailnoise &:= \frac{1}{M} \sum_{i\ne j, i \notin\Wbb} \eta_i \cdot \xi_i.
\end{align*}
First notice that $\big(\xi_i^{(m)}\big)_{m=1}^{M}$ are independent random variables and
\begin{align*}
    \Ebb[ \xi_i^{(m)}  ] = 0,\quad 
    \Var[ \xi_i^{(m)}  ] = \big( \fB_i^{(m)} \big)^2.
\end{align*}
These imply that 
\begin{align*}
    \Ebb[ \xi_i ] = 0,\quad 
    \Var[ \xi_i  ] = \sum_{m=1}^{M} \big( \fB_i^{(m)} \big)^2.
\end{align*}
Moreover, notice that $(\eta_i, \xi_i)_{i \neq j}$ are independent random variables, and
\begin{align*}
    \Ebb [\eta_i^2  ] = \frac{1}{W},
\end{align*}
we then have 
\begin{align*}
\Ebb [ \eta_i \xi_i ] &= 0; \\
    \Var [ \eta_i \xi_i  ] &= \Ebb[ \eta_i^2   ] \cdot \Var[ \xi_i  ] + \Var[\eta_i  ] \cdot \big( \Ebb [\xi_i ] \big)^2 \\
    &= \frac{1}{W} \cdot \sum_{m=1}^{M} \big( \fB_i^{(m)} \big)^2.
\end{align*}
Therefore we conclude that 
\begin{align*}
\Ebb [ \tailnoise  ] & = \frac{1}{M} \sum_{i\neq j, i \notin \Wbb} \Ebb [ \eta_i \xi_i  ]  = 0; \\
    \Var [ \tailnoise ] 
    &= \frac{1}{M^2} \sum_{i\neq j, i \notin \Wbb} \Var [ \eta_i \xi_i  ] \\
    &= \frac{1}{M^2 W} \cdot \sum_{i\neq j, i \notin \Wbb} \sum_{m=1}^{M} \big( \fB_i^{(m)} \big)^2  \\
    &\le \frac{1}{M^2 W} \cdot \sum_{ i \notin \Wbb} \sum_{m=1}^{M} \big( \fB_i^{(m)} \big)^2 .
\end{align*}
Then by Chebyshev we see that: for fixed $j \in [d]$ and $\ell \in [L]$ it holds that
\begin{align*}
    \Pbb\bigg\{  | \tailnoise | \ge \sqrt{ \frac{10 }{M^2 W}\cdot \sum_{i \notin \Wbb} \sum_{m=1}^{M} \big( \fB_i^{(m)} \big)^2 } \bigg\} < 0.1.
\end{align*}
By a union bound we see that: for fixed $j \in [d]$ and $\ell \in [L]$ it holds that
\begin{align*}
    \Pbb\bigg\{  | \tilde{\fB}_j^{(\ell)} - \fB_j | < \sqrt{ \frac{10 }{M^2 W}\cdot \sum_{i \notin \Wbb} \sum_{m=1}^{M} \big( \fB_i^{(m)} \big)^2 } \bigg\} > 0.8 > 0.5.
\end{align*}

\paragraph{Probability Amplification.}
Fixing $j$.
Recall that $(\tilde{\fB}^{(\ell)}_j)_{\ell=1}^L$ are i.i.d. random variables and that $\tilde{\fB}_j := \median\{\tilde{\fB}^{(\ell)}_j: \ell \in [L] \}$. 
By Chernoff over $\ell$ and union bound over $j$ we see that:
\begin{align*}
    \Pbb\bigg\{ \text{for each}\ j \in [d],\quad |\tilde{\fB}_j - \fB_j| \ge \sqrt{ \frac{10 }{M^2 W}\cdot \sum_{i \notin \Wbb} \sum_{m=1}^{M} \big( \fB_i^{(m)} \big)^2 } \bigg\} < 2d\cdot \exp(\Omega(L)).
\end{align*}
By choosing $L = \Theta(\log(2d/\delta))$ we obtain that, with probability at least $1-\delta$, 
\begin{align*}
\text{for each}\ j \in [d],\quad |\tilde{\fB}_j - \fB_j| \lesssim \sqrt{ \frac{10 }{M^2 W}\cdot \sum_{i \notin \Wbb} \sum_{m=1}^{M} \big( \fB_i^{(m)} \big)^2 }. 
\end{align*}
\end{proof}


\section{Missing Proofs for Section \ref{sec:private}}
\subsection{Proof of Theorem \ref{thm:dp:hcs}}
\begin{proof}[Proof of Theorem \ref{thm:dp:hcs}]

We follow the method of \citet{pagh2022improved,zhao2022differentially} to add DP noise to all $M$ sketches.
Suppose $\Fcal = \big( \fB^{(m)} \big)_{m=1}^M$ and $ \mathring{\Fcal} = \big( \mathring{\fB}^{(m)} \big)_{m=1}^M$ are the sets of local frequencies for two neighboring datasets respectively, then 
\[
\|{\Fcal} - \mathring{\Fcal} \|_2 \le \frac{1}{n}.
\]
Denote the sketches to be released by $\Scal \circ \Fcal := \big( \SB^{(m)} \circ\fB^{(m)}  \big)_{m=1}^M $.
One can then calculate the $\ell_2$-sensitivity:
\begin{align*}
    \|\Scal \circ {\Fcal} - \Scal \circ \mathring{\Fcal} \|_2 \le \frac{\sqrt{L}}{n},
\end{align*}
where $L\eqsim \log(d/\delta)$ is the sketch length.
Therefore the sketching will be $(\epsilon, \delta)$-DP by adding Gaussian noise $\Ncal(0, \sigma^2)$ to each bucket of each sketch, where 
\[\sigma \eqsim \frac{\sqrt{L\log(1/\delta)}}{n \epsilon}.\]
The final released frequency estimator is obtained by post-processing the sketch, so it is also $(\epsilon, \delta)$-DP.

We then calculate the error for the noisy sketch matrix.
For each row estimator, we have that with probability at least $2/3$:
\begin{align*}
        \tilde{\fB_j}^{(\ell)} - \fB_j^\ell  
    &= \tailnoise + \frac{1}{M} \sum_{m=1}^M \mathtt{rad}_m \cdot \Ncal(0, \sigma^2)  \\
    &= \tailnoise + \Ncal(0, \sigma^2 / M) \\
    &\lesssim \sqrt{ \frac{1}{M^2 w}\cdot \sum_{i \notin \Wbb} \sum_{m=1}^{M} \big( \fB_i^{(m)} \big)^2 } + \frac{\sqrt{L\log(1/\delta)}}{\sqrt{M} n \epsilon}.
\end{align*}
By taking median over $L \eqsim \log(d/\delta)$ repeats, we see that with probability at least $1-\delta$, it holds that 
\begin{align*}
    \text{for each}\ j \in [d],\quad | \hat{\fB}_j  - \fB_j | 
    &\lesssim \sqrt{ \frac{1}{M^2 w}\cdot \sum_{i \notin \Wbb} \sum_{m=1}^{M} \big( \fB_i^{(m)} \big)^2 } + + \frac{\sqrt{\log(d/\delta)\cdot\log(1/\delta)}}{\sqrt{M} n \epsilon} . 
\end{align*}
\end{proof}

\end{document}